\newacronym{dai}{DAI}{Distributed Artificial Intelligence}
\newacronym{airdai}{AirDAI}{Distributed Artificial Intelligence over-the-air}
\providecommand{\U}[1]{\protect\rule{.1in}{.1in}}
\providecommand{\U}[1]{\protect\rule{.1in}{.1in}}
\newtheorem{assumption}{Assumption}
\newtheorem{theorem}{Theorem}
\newtheorem{definition}{Definition}
\newtheorem{lemma}{Lemma}
\newtheorem{appd_lemma}{Lemma}
\newtheorem{appd_theorem}{Theorem}
\begin{document}
%
\title{Paving the Way for Distributed Artificial Intelligence over the Air}
%
%
%
%

\author{Guoqing~Ma,~\IEEEmembership{Student~Member,~IEEE,}
        Shuping~Dang,~\IEEEmembership{Member,~IEEE,}
        Chuanting~Zhang,~\IEEEmembership{Member,~IEEE,}
        and~Basem~Shihada,~\IEEEmembership{Senior Member,~IEEE}
\IEEEcompsocitemizethanks{\IEEEcompsocthanksitem G.~Ma,~S.~Dang,~C. Zhang~and~B.~Shihada~are~with Computer, Electrical and Mathematical Science and Engineering Division, King Abdullah University of Science and Technology (KAUST), 
Thuwal 23955-6900, Saudi Arabia  (e-mail: \{guoqing.ma, shuping.dang, chuanting.zhang, basem.shihada\}@kaust.edu.sa).}
}


\IEEEtitleabstractindextext{%
\begin{abstract}
Distributed Artificial Intelligence (DAI) is regarded as one of the most promising techniques to provide intelligent services under strict privacy protection regulations for multiple clients. By applying \gls{dai}, training on raw data is carried out locally, while the trained outputs, e.g., model parameters, from multiple local clients, are sent back to a central server for aggregation. Recently, for achieving better practicality, \gls{dai} is studied in conjunction with wireless communication networks, incorporating various random effects brought by wireless channels. However, because of the complex and case-dependent nature of wireless channels, a generic simulator for applying \gls{dai} in wireless communication networks is still lacking. To accelerate the development of \gls{dai} applied in wireless communication networks, we propose a generic system design in this paper as well as an associated simulator that can be set according to wireless channels and system-level configurations. Details of the system design and analysis of the impacts of wireless environments are provided to facilitate further implementations and updates. We employ a series of experiments to verify the effectiveness and efficiency of the proposed system design and reveal its superior scalability.
\end{abstract}

\begin{IEEEkeywords}
Distributed deep learning (DDL), federated learning (FL), system design, simulator design, wireless environment, convergence analysis.
\end{IEEEkeywords}}

\maketitle

\IEEEdisplaynontitleabstractindextext

%
\IEEEpeerreviewmaketitle

\section{Introduction}

\IEEEPARstart{A}{s} speculated in the perspective paper `What should 6G be?' \cite{dang2020should}, sixth generation (6G) communication networks are expected to be human-centric, which poses much higher requirements for privacy protection. On the other hand, based on  existing artificial intelligence (AI) architectures, protecting digital privacy is, to some extent, contradictory to the demand of user data by intelligent communication services \cite{8677314}. This is because user data are required to be collected, processed, and utilized to precisely identify user demands so that truly intelligent and high-quality communication services can be provided to end users \cite{8928170}. These user data inevitably contain personal and sensitive information that users are not willing to share and should be restricted by legislation \cite{9060868}. Collecting and processing user data by such a centralized architecture could also lead to a high divulging risk, which becomes much more common nowadays \cite{9311931}. Moreover, relying on such a centralized architecture for intelligent communication services, one can never rule out the possibility that a malicious \textit{Big Brother} takes advantage of user data and manipulates users and even the entire society with ulterior motives \cite{6573299}.

To solve the dilemma between high-intelligence communication service and user privacy protection,
distributed deep learning (DDL) is proposed and has soon attracted researchers' attention in the communication and computing research communities \cite{8770530}. The large-scale DDL was first investigated in \cite{dean2012large} to solve the insufficient computation ability in a single node, in which a central server aggregates the one-step model gradients data updated from all agents by randomly splitting the dataset into them. However, aggregating the gradients at each stochastic gradient descent (SGD) updating increases communication overhead \cite{bottou2012stochastic}. To reduce communication overhead, local SGD has been proposed in \cite{stich2018local}, by which the multiple clients update model parameters, instead of gradients, to the central server for aggregation after a preset local SGD updating steps. Federated learning (FL) is a further advancement of local SGD \cite{konevcny2016federated}, by which only a subset of clients will update their model parameters to the central server, instead of all clients. Despite subtle differences among these learning strategies, they all belong to the family of \gls{dai} \cite{bond2014readings} due to the \textit{decoupling of client training and server aggregation}. Hence, we apply the term \gls{dai} instead of carefully distinguishing them as follows.

\begin{figure}[!t]
\centering
\includegraphics[width=3.0in]{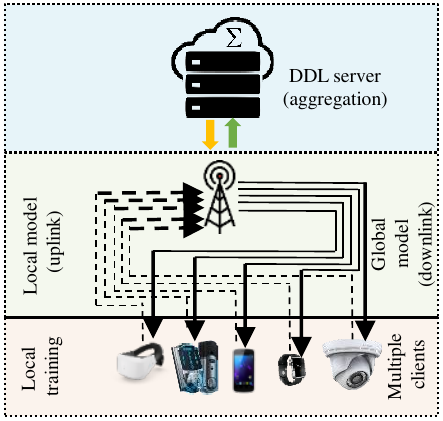}
\caption{Classic architecture of \gls{dai}, consisting of one \gls{dai} server in cloud, multiple local clients, and the uplinks/downlinks connecting them.}
\label{fldemo}
\end{figure}

Different from classical machine learning (ML) or deep learning (DL) techniques adopting centralized processing architectures \cite{9226618,9058719,9365714}, \gls{dai} utilizes a distributed processing architecture that consists of one \gls{dai} server (viz. the model owner) and multiple clients (viz. the data owners) \cite{9084352}. The clients directly collect users' raw data and process them by local training algorithms to obtain \textit{local model} parameters. 
These local model parameters are then aggregated in a certain way at the \gls{dai} server. 
The aggregated model produced at the \gls{dai} server is called the \textit{global model}, which will subsequently be updated to the clients for providing intelligent communication services. 
In this way, the global model training and first-hand raw data accessing can be decoupled, and thereby the data minimization principle for the privacy of consumer data is followed  \cite{mcmahan2017communication}. 
The classic architecture of \gls{dai} is shown in Fig.\ref{fldemo} for clarity.

Due to the distributed processing architecture and exemption from users' raw data, \gls{dai} is believed to be one of the most promising techniques to provide intelligent services under strict regulations of privacy protection \cite{8994206,9084352,9311906,9170265}. In addition, \gls{dai} can also facilitate the implementations of other promising 6G communication techniques by releasing privacy concerns and reducing the volume of data required to transmit \cite{konevcny2016federated}. Consequently, spectral efficiency, energy efficiency, and latency of communication systems would all be improved by \gls{dai} \cite{9141214}.

As described above, \gls{dai} computation is performed at both of the \gls{dai} server and clients, and the exchange of model parameters is frequent and necessary. As a result, the communication and computing procedures of \gls{dai} are coupled, which should be jointly considered and analyzed as a whole \cite{9210812}. 
Since recently, an increasing number of research works have analyzed both communication and computing issues related to \gls{dai} in wireless communication networks (details of them will be given and reviewed in the next section).
However, to the best of our knowledge, a generic system for designing and testing \gls{dai} algorithms in \textit{wireless communication networks} is still lacking, which undoubtedly impedes the development of \gls{dai} in wireless environments and \gls{dai} aided wireless networks. 
First of all, without a benchmark system, researchers interested in \gls{dai} algorithms implemented in wireless environments need to program individual communication scenarios for investigation. Also, the simulation results provided by \gls{dai} can hardly be verified by reproduction and compared with results generated by other benchmark algorithms. At last, even thought with the increasing awareness on generic design of \gls{dai} systems \cite{bonawitz2019towards, he2020fedml, li2019federated}, the researchers neglect the simulations on wireless environments, which proves to be an important factor in our work.

In this regard, we propose \gls{airdai}, a generic system design for \gls{dai} over the air, aiming at accelerating the relevant research progresses$\footnote{The  codes associated with the proposed system as well as its simulator can be found from the open GitHub repository link: \url{https://github.com/KAUST-Netlab/AirDAI}}$. 
\begin{itemize}
    \item To ensure generality and practicability, we generalize the system design by considering a series of wireless features, including path loss, shadowing, multi-path fading, and mobility. As a result, the proposed system can be easily adapted to different settings for designing, testing, and investigating \gls{dai} applied in different wireless scenarios.
    \item We further analyze the convergence rate of \gls{dai} applied in wireless environments and affected by a set of stochastic factors.
    \item In addition, we provide a Python-written simulator according to the proposed system, and thus, it can be easily integrated into popular ML and DL frameworks, e.g., PyTorch \cite{paszke2019pytorch} and TensorFlow \cite{abadi2016tensorflow}. 
    \item Moreover, because of the generic nature, the proposed system design can be highly customized. Designers are allowed to alter the wireless communication environment and introduce self-defined quality of service (QoS) metrics with our provided simulator. 
\end{itemize}
  
The rest of the paper is organized as follows. In Section~\ref{rw}, we carry out comprehensive literature research over the works related to \gls{dai} in wireless communication networks. Summarizing the existing literature and research directions, we propose the system design in Section~\ref{system} and present the details of wireless environmental setups and convergence analysis in Section~\ref{wcm}. The effectiveness and efficiency of the proposed system design and its associated simulator are verified through several applications in Section~\ref{a}. Finally, the paper is concluded in Section~\ref{c}. For readers' convenience, we list key notations and symbols used in this paper in Table~\ref{nomenslit}.

\begin{table}[!t]
\renewcommand{\arraystretch}{1.3}
\caption{Notations and symbols used in this paper.}
\label{nomenslit}
\centering

\begin{tabular}{|c|p{4.5cm}|}
\hline
Notations and Symbols & Description\\
\hline
cell & A simulated cell containing several wireless connected clients\\
\hline
$Cr$ & Number of multiple processes used in each simulation\\
\hline
$C$ & Number of cells in each simulation\\
\hline
$M$ & Number of simulated clients per cell\\
\hline
$N$ & Number of total simulated clients in each simulation\\
\hline
$p_n$ & Ratio of local dataset size in the $n$th client\\
\hline
$r$ & Ratio of activated clients in each simulation\\
\hline
$PER$ & Packet error rate at the receiver \\
\hline
$E$ & Number of local SGD updates for each client\\
\hline
$b_s$ & Local training batch size for each client\\
\hline
$NIS_a$ & Addictive noise by naughty clients during the training phase\\
\hline
$NIS_m$ & Multiplicative noise by naughty clients during the training phase\\
\hline
$\eta_t$ & Local learning rate at the $t$th round\\
\hline

\end{tabular}
\end{table}

\section{Related Works}\label{rw}
Before planning the generic simulator design of \gls{airdai}, we need to have a profound insight into the research trends and demands of \gls{dai} in wireless communications in recent years. To well capture the research trends and demands, we carry out a comprehensive literature review over most key research works and milestones in this section. 

It has been recognized in \cite{9084352, kairouz2019advances} that communications are the critical bottleneck for \gls{dai} because of the heterogeneity of wireless networks. Therefore, communication-efficient protocols are imperative for sending messages of model updates as part of the training process, which should stipulate the number of communication rounds and the size of transmitted messages at each round \cite{han2020adaptive, wangni2017gradient, alistarh2017qsgd}.
Another core challenge mentioned in \cite{9084352} is that the unreliable connection of massive clients must be taken into consideration when modeling and analyzing \gls{dai} in wireless communication networks. Most importantly, the statistical heterogeneity of clients must be considered, which indicates that the signal propagation environments and system configurations of clients are diverse. As a result, personalized and client-specific modeling for \gls{dai} in wireless communication networks is required. 

An important application of \gls{dai} in wireless communication networks is related to mobile edge computing \cite{8770530}. In \cite{9060868}, \gls{dai} in mobile edge networks is comprehensively reviewed, and a \gls{dai} aided edge computing system is constructed. This work also summarizes three unique characteristics of \gls{dai} aided edge computing networks: Slow and unstable communications, heterogeneous clients, and privacy/security concerns. The resource allocation problems for \gls{dai} aided edge computing networks are briefly discussed, including client selection, adaptive aggregation, and incentive mechanism. It has also been pointed out in \cite{9060868} that \gls{dai} aided edge computing can help with several wireless applications, e.g., base station (BS) association and vehicular communications.

In a broader context, the motivation, opportunities, and challenges of leveraging \gls{dai} for wireless communications are discussed in \cite{9141214}. The optimization of learning time versus energy consumption by using the Pareto efficiency model and the balance between computation and communication for \gls{dai} in wireless communication networks are presented in \cite{8737464}, in which qualitative insights into \gls{dai} in wireless communication networks and a simplified multi-access communication model are provided. By the model provided, the transmission time and energy consumption for a given amount of data in \gls{dai} aided wireless communication networks are quantified. The following study on the resource allocation problems, including transmission time, energy consumption, and \gls{dai} convergence, is presented in \cite{dinh2019federated}, which has been verified to outperform the $\mathtt{FedAvg}$ algorithm by experiments on TensorFlow. A more realistic communication model for \gls{dai} in wireless communication networks is constructed in \cite{9210812}, in which learning, wireless resource allocation, and client selection are jointly optimized to minimize the \gls{dai} loss function under the constraints of latency and energy consumption. The same model is also utilized in \cite{chen2020convergence} to reduce the convergence time for \gls{dai} over wireless communication networks. 

\gls{dai} has also been utilized in more complicated wireless application scenarios, e.g., the Internet of Things (IoT), wireless sensor networks, and vehicular communication networks. In \cite{8950073}, \gls{dai} is applied to power-constrained IoT devices with slow and sporadic connections, and a fully decentralized \gls{dai} system without the \gls{dai} server is proposed. The decentralized \gls{dai} system relies on the device-to-device (D2D) communication protocols and is, in particular, suited for dense networks consisting of massive cooperative devices. In \cite{8963610}, an incentive mechanism is proposed and studied to encourage clients to contribute to \gls{dai} in the IoT. The participation of massive clients in the \gls{dai} system is formulated as a Stackelberg game, and the Nash equilibrium of the game is derived. \gls{dai} is also employed to estimate the tail distribution of vehicle's queue lengths in vehicular communication networks, which has been verified to be able to produce comparable accuracy to centralized learning methods \cite{8917592}.

\section{System Proposal}
\label{system}

\begin{figure*}[!t]
\centering
\includegraphics[width=7in]{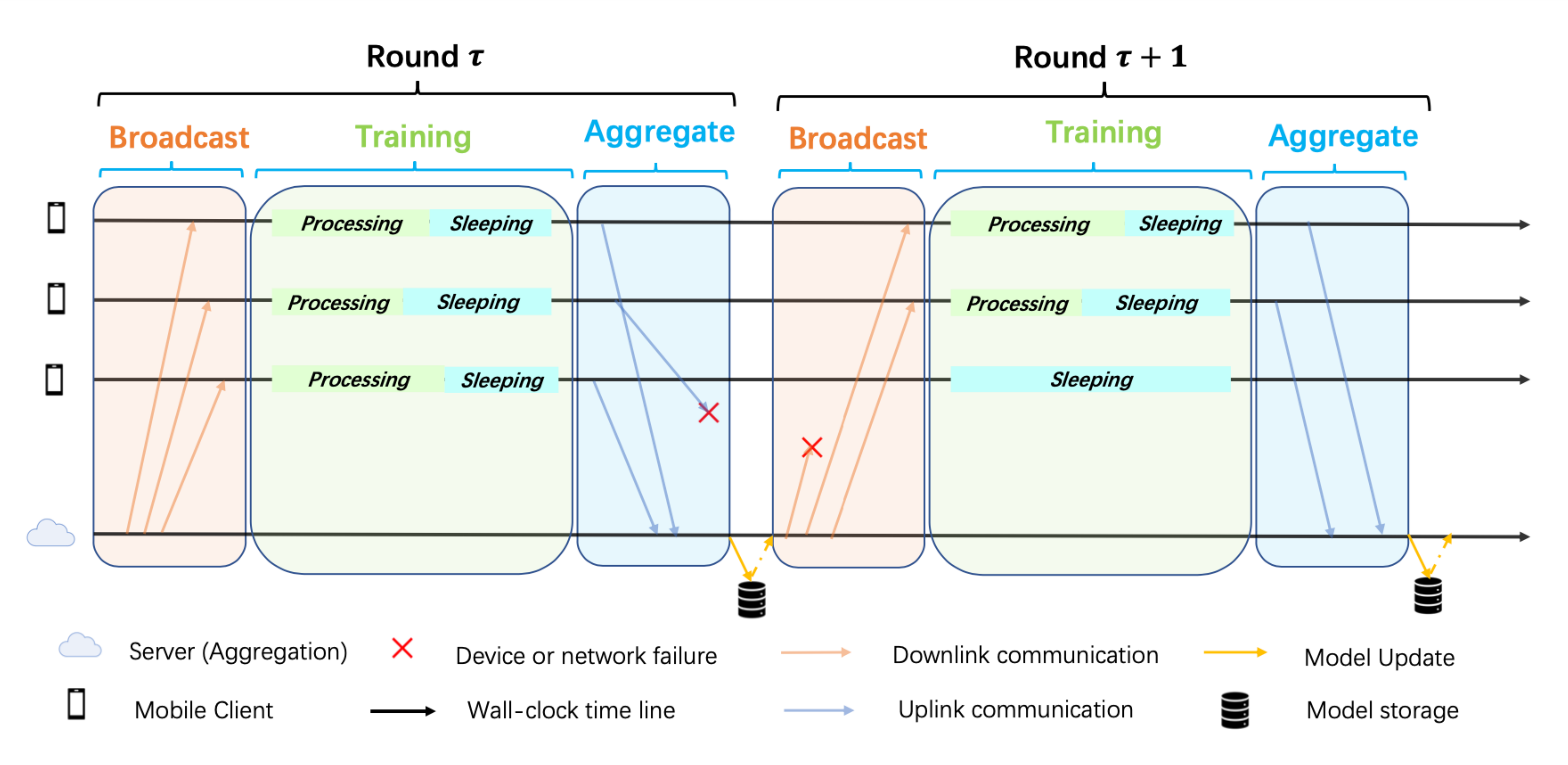}
\caption{Virtualization of one operational round of \gls{airdai}.}
\label{round}
\end{figure*}

We propose the \gls{airdai} system in this section, where we analyze and decompose the basic elements of \gls{dai} in general, introduce the programming procedures for its associated simulator, and expatiate on its scalability.
The \gls{airdai} process can be generally decomposed with two observation aspects: Spatial and temporal. 
From the spatial perspective, it can be further decomposed into the clients' sub-task and the server's sub-task for  different roles that agents play.
From the temporal perspective, the process is combined with the computing module and the communication module because of the mutually exclusive time slots on which the two modules execute. 
To make a global view on the holistic process, we virtualize and represent the \gls{airdai} tasks while concerning both aspects. 
For the convenience of illustration, we refer both clients and servers to be agents in the following without ambiguity.

\subsection{Virtualization and Basics of \gls{airdai}}
From the spatial perspective, the participants of an \gls{airdai} task in wireless environments are usually composed of several computing clients and one central server.
From the temporal perspective, \gls{airdai} tasks are processed iteratively between the clients and the server. 
Specifically, at the beginning of each time slot, clients process the pre-defined training tasks based on the local datasets and send out the computed results to the server for aggregation. 
Once received data from the clients, the server further processes data by a pre-defined aggregation function. 
Then, based on specific broadcasting strategies, the server either sends the processed data out in a limited time window or after completing the reception phase from all clients.
The interaction, which begins with the server broadcasting and ends when the server aggregates the result, is defined as a \textit{round}, as illustrated in fig \ref{round}. 

With the above explanations and settings, we represent the $\tau$th round abstractly as follows:
\begin{equation}
\label{FL_eq}
\left\{
     \begin{array}{lr}
     \mathtt{Server}:  \mathcal{K}_\tau^n \longleftarrow \mathtt{broadcast}\{\mathtt{aggregate}\{\mathcal{J}_\tau^n\}\}  \\
     \mathtt{Clients}:  \mathcal{J}_{\tau+1}^n \longleftarrow \Phi_{\mathcal{D}_n}(\mathcal{J}_\tau^n, \mathcal{K}_\tau^n)\\
     \end{array},
\right.
\end{equation}
where we utilize $\mathcal{J}_\tau^n$ to denote the message sent out from the $n$th client at round $\tau$ and $\mathcal{K}_\tau^n$ denotes message sent back to the $n$th client at round $\tau$. 
After that, it begins the ($\tau+1$)th round, and the $k$th client processes its pre-defined task $\Phi$ based on its own dataset $\mathcal{D}_n$ with received message $\mathcal{K}_\tau^n$ at round $\tau$. After finishing the computation phase, it sends the result denoted as $\mathcal{J}_{\tau+1}^n$ to the server for aggregation. 
It is worth noting that the `aggregation' and `broadcast' may only take effects on a subset of clients according to  specific policies. 
The above (\ref{FL_eq}) is a generic virtualized process covering most well-known \gls{dai} paradigms \cite{tran2019federated, bonawitz2019towards}.

\subsubsection{Synchronous and Asynchronous Settings}
Considering whether clients receive the same messages from the server during each \textit{round}, \gls{dai} schemes can be classified into synchronous and asynchronous categories \cite{sprague2018asynchronous, konevcny2016federated}. 
With the asynchronous settings, the server receives the data from a single client, then aggregates it with the historical data from other clients, and sends it back to the corresponding client before aggregating the data from newly coming clients. 
With the synchronous settings, the server has to suspend broadcasting before aggregating data from all clients or the activated clients within a pre-defined time window. The broadcast results after aggregation are identical to the activated clients during each \textit{round}. 
These schemes can be achieved by adjusting the virtual functions of $\mathtt{broadcast}\{\cdot\}$ and $\mathtt{aggregate}\{\cdot\}$  at the server end, making both the synchronous and asynchronous schemes compatible within the format of virtualization (\ref{FL_eq}).

\subsubsection{Network Topology and Virtual Channels}
To enable topological formulations, we can simply treat the agents, including both the clients and server, as vertices and the communication channels as edges. The network topology can be built as a bi-directional graph. 
Intuitively, we can represent the system as a graph $\mathcal{G}=(\mathcal{N}, \Theta)$, where $\mathcal{N}$ denotes the set of the clients and server, and $\Theta$ denotes the set of effective virtual channels.
By assigning specific parameters to corresponding vertices and edges, such as communication and computation power to different agents or Wi-Fi/LTE settings among agents, the system can be flexibly configured with varied wireless environmental settings. 

\subsubsection{QoS and Termination Conditions}
While not only paying attention to the validation accuracy or loss similar to conventional DL tasks, the proposed \gls{airdai} system focuses on the output of system QoS, e.g., total energy consumed, total time consumed, the number of activated clients per round, the number of packets lost, and etc.
Meanwhile, the server monitors the simulator states for each round and stops the simulation if one or more user-defined termination conditions are satisfied, e.g., validation accuracy reaches 98$\%$; simulation time is more than 30 minutes; total energy consumed is greater than 300 J, and etc.

\subsection{\gls{airdai} Programming Procedures}
According to the proposed system, a typical \gls{airdai} task can be generalized into three steps: building the network topology with virtual channels, defining the aggregating and broadcasting functions as well as partitioning the training dataset and building the DL model. We give introductions to all these steps as follows.

\subsubsection{Building network topology}
We provide a Python-written interface to automatically build the network topology with a specified data structure as input.
The input is organized by agents with varied attributes. 
Each agent is represented by a tree-like data structure with its identity denoting the tree root.
For each agent data structure, we arrange different layers to place the attributes according to the corresponding characteristics. 
For instance, we manually set the attribute ``role" in the first layer of each tree with different string values to distinguish between the clients and the server. 
Generally, we arrange the attributes related to the agent itself in the first layer, such as the battery capacity, the initial location and mobility speed, the computation and communication power, and etc.
Considering asymmetric channels between nodes, we cannot omit the attributes between a pair of adjacent nodes. For those attributes shared among multiple nodes, such as the virtual channels between pairs of adjacent nodes, we set the attribute ``adj" in the first layer and the adjacent node identities in the second layer with the related attributes in the third layer. Therefore, this definition of data structure is also memory efficient. An example of the data structure can be demonstrated in Fig. \ref{data_structure}.
The embedded interface will parse the data structure and complete the topology automatically.
To take advantage of the existing functions of network simulators, the underlying codes for simulating wireless networks are achieved within the system of ns-3 \cite{nsdiscrete}.

\begin{figure}[!t]
\centering
\includegraphics[width=3in]{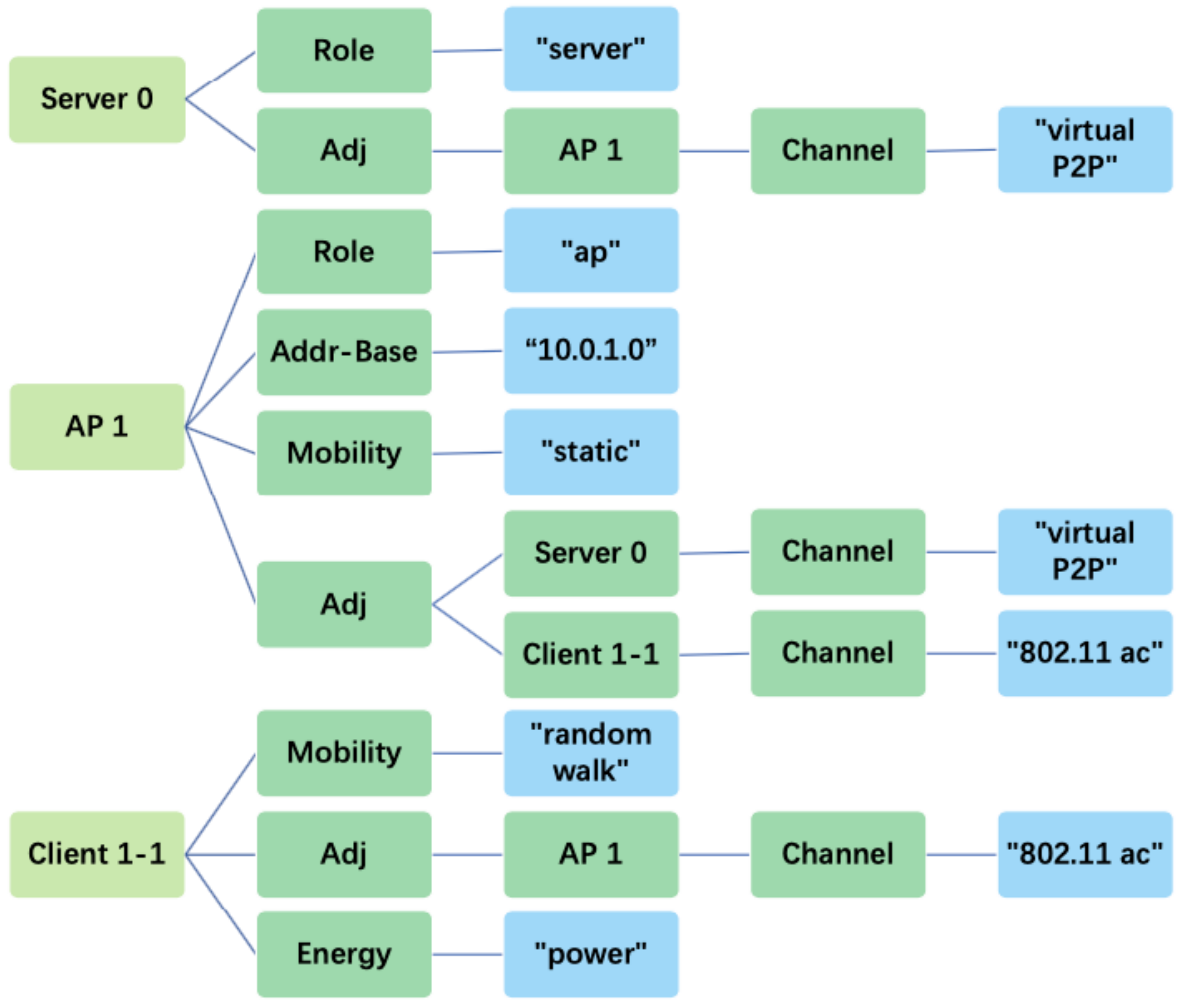}
\caption{An example of data structure for demonstration purposes.}
\label{data_structure}
\end{figure}

\subsubsection{Defining aggregation and broadcasting functions}
The system provides a programming paradigm to define personalized aggregating and broadcasting functions. 
It keeps a \textit{buffer} placeholder for each agent to receive or send new data from/to other agents and a \textit{memory} placeholder to memorize the \textit{buffer} during each round.
As a result, aggregating and broadcasting functions may only work within the activated agents in  predefined network topology during each round so as to emulate the failure of transmissions in realistic wireless environments due to certain QoS constraints.

Once an agent receives new data sequentially from the others, the \textit{buffer} will record the data and update its value according to the personalized update function. In addition, the \textit{memory} keeps tracking the latest \textit{buffer} value. Mathematically, the process can be formulated as follows:
\begin{equation}
\begin{cases}
    \mathtt{buffer} \longleftarrow \mathtt{Update}(\mathtt{buffer}, \mathtt{memory})\\
    \mathtt{memory} \longleftarrow \mathtt{buffer}
\end{cases},
\end{equation}
where the function $\mathtt{Update}(\cdot,\cdot)$ represents the user-defined buffer updating scheme. Taking the FedAvg algorithm as an example, the server averages the data currently received from activated clients during each round. Hence, $\mathtt{Update}(\cdot,\cdot)$ is the weighted average function of the latest received data and its memory. For the clients, the \textit{buffer} is the returned-back data at the end of each \textit{round}. Therefore, the \textit{buffer}  updates itself with the latest received data.

Meanwhile, the synchronous and asynchronous settings can also be achieved by determining when the server sends the updated \textit{buffer} to its adjacent client nodes. 
Specifically, when adopting asynchronous settings, the server immediately returns the updated \textit{buffer} to its recently communicating client, while with synchronous settings, the server broadcasts the recently updated buffer only after receiving data from a required number of clients.

\subsubsection{Partitioning dataset and building DL model}
The \gls{dai} tasks presume that the training dataset needs to be partitioned into multiple computing clients before training. 
We provide a paradigm to define the strategy of dataset partition. Given a predefined partition ratio of each client, each simulation process loads the identical raw dataset from shared memory and split it according to the partition ratio. Subsequently, each client with a unique rank will be assigned the corresponding sub-dataset.
If the partition ratio is not specified, the dataset will be by default partitioned into all clients in a uniform and random manner.
After dataset partition, the definition of the \gls{dai} model is just the same as the centralized counterparts.
The proposed system provides a Python wrapper function for the model to automatically aggregate and broadcast required values during the training process at each round while keeping users unaware of it unless users would like to customize their own aggregate and broadcast functions. 
For the other settings of training, users can perform exactly the same as if there were only one client in centralized tasks.



\subsection{Scalability}
The \gls{dai} tasks in real environments usually involve a large number of computing devices with limited computing power and storage space, such as smart IoT devices and wireless sensors. 
To emulate this characteristic by limited available computing resources, e.g., a powerful workstation with several computing cores or a small computing cluster, we  implement two ingenious methods. First, the proposed system can run on multiple computing cores through distributed multi-processing interface (MPI) communication backends \cite{paszke2019pytorch}.
Before initiating simulations, the system automatically partitions the clients and the server into different computing cores and gives each core a unique rank identity. Each core maintains the identical wireless topology, in which the clients and server partition details are recorded.  
To distinguish multiple clients which are simulated in parallel but on different computing cores, we assign a unique address to each client as $(rank\_id, node\_id)$, where $node\_id$ is the index of the agent in its corresponding core.
During the processes of aggregating and broadcasting in each round, the clients send and receive data to/from the corresponding computing core where the server located through communication backends. Also, the whole communication process is unaware to users.

Second, within each computing core, we propose and utilize the scheme called ``series-tube", which provides a wrapper function and serially executes a list of objects defined on Python, to enhance the  capability of the simulator. 
Given a pre-defined object as input, the wrapper function replicates it into a list of objects according to the number of clients in a single computing core, while maintaining its functions and values as a series-tube object. 
By calling the wrapped object, the simulator serially processes the functions of the replicated objects and returns the results into a list format. 
Therefore, it keeps the whole process user-unaware and makes the codes scalable with just a few modifications.

\section{Wireless Environmental Setups and Convergence Analysis}\label{wcm}

As we introduced the system in the last section, the successful implementation of \gls{dai} in realistic transmission environments depends on the reliability of the wireless channels, over which model parameters are transmitted. It is undoubtedly that training a model in the unreliable wireless environment will degrade the efficiency compared to that in a fully reliable environment. Therefore, it is worth investigating and quantifying the impacts of the randomness of wireless channels on the training procedure of \gls{dai}. As the core of the simulator, we try to keep our design generic as much as we can and expatiate on the wireless system setups. Then, based on the given wireless environmental setups, we further analyze the convergence of a generic \gls{dai} algorithm.

\subsection{Effects of Wireless Environmental Setups}
As shown in Fig. \ref{fldemo}, there are two kinds of wireless channels pertaining to the uplink and downlink. The former refers to the transmission links from the clients to the \gls{dai} server, while the latter refers to the transmission links from the \gls{dai} server to the clients. Because the global model parameters transmitted from the \gls{dai} server are the same for all clients, we can easily adopt a broadcast protocol for the downlink transmission with sufficiently large transmit power and bandwidth, and therefore its reliability can be guaranteed. On the contrary, because all clients are required to transmit unique local model parameters, a unicast protocol is adopted for uplink transmissions. However, because clients are normally of less transmission capability, the reliability of uplink transmission is thereby problematic, and the uplink communication efficiency is of the paramount importance \cite{konevcny2016federated}. Furthermore, the unstable uplink transmission will result in a reduced number of clients' responses within a time window\footnote{The time window is dynamically managed by pace steering techniques, depending on the number of clients and service requirements \cite{bonawitz2019towards,9060868}. For example, when the number of clients is small, the time window $\epsilon$ should be set to a relatively large value so that a sufficient number of responses from clients can be collected and aggregated at the \gls{dai} server. On the other hand, when the number of clients goes large, the time window $\epsilon$ should be reduced in order to reduce the computing burden at the \gls{dai} server. The time window $\epsilon$ is, in essence, a trade-off factor between computing and communication efficiencies.} $\epsilon$, which could lead to inefficient aggregation at the \gls{dai} server and thereby a low training efficiency overall. Consequently, the wireless communication models of the uplink require special attention and are worth investigating. In the following, we specifically analyze how the randomness of wireless uplink channels affects the number of clients' responses within a predetermined time window.

Temporarily neglecting packet transmission errors, whether or not a packet from a certain client can be received is directly related to the random event that whether the transmission latency of the packet from the $n$th client, denoted as $L_n$, is less than or equal to time window $\epsilon$, $\forall~n\in\{1,2,\dots,N\}$, where $N$ is the total number of clients. Referring to the Shannon-Hartley theorem, the transmission latency $L_n$ is dominated by four factors: 1) bandwidth $B_n$; 2) transmit power $PT_n$; 3) packet size $S_n$; 4) channel power gain $G_n$. To be explicit, we can also express the transmission latency as a function of these four factors: $L_n(B_n,PT_n,S_n,G_n)$.

The first three aforementioned factors are specified by communication and \gls{dai} computing protocols and are determinate, while the last factor, i.e., the channel power gain $G_n$, is stochastic and randomly varies over time, frequency, and space. Statistically, channel power gain $G_n$ is mainly affected by four wireless propagation phenomena: 1) path loss; 2) shadowing; 3) multi-path fading; 4) molecular absorption (applicable to millimeter-wave and terahertz radios). The joint impacts of these wireless propagation phenomena can be described and simulated by different channel models, e.q., Rayleigh, Rician, and Nakagami channel models as well as a variety of compound channel models \cite{s2010wireless,rubio2007evaluation,9210812,barsocchi2006channel,proakis2008digital}, depending on the use of spectrum, node mobility, geographical and atmospheric conditions. To maintain generality, we do not specify the use of channel model in this paper.

Meanwhile, considering that there might exist errors in the received packet, error check and re-transmission are imperative in most modern communication protocols. Incorporating both mechanisms, the total transmission time of a client, denoted as $TL_n=L_n\Sigma_n$, depends on the transmission latency of a single transmission attempt $L_n$ and the number of re-transmissions $\Sigma_n$. Note that, the number of re-transmissions $\Sigma_n$ is also a random variable that is related to the coding and modulation setups and characterized by packet error rate $PER_n$. For simplicity, we can simply adopt the the geometric distribution with parameter $PER_n$ to model the random number of re-transmissions $\Sigma_n$. Based on the formulation and explanation presented above, we can simply define the packet loss rate of the $n$th client in the physical layer to be $\rho_n=\hat{F}_{TL_n}(\epsilon)=\mathbb{P}\left\lbrace TL_n>\epsilon\right\rbrace=1-F_{TL_n}(\epsilon)$, where $F_{TL_n}(\epsilon)$ and $\hat{F}_{TL_n}(\epsilon)$ are the cumulative distribution function (CDF) and the complementary CDF (CCDF) of the total transmission time $TL_n(TL_n,\Sigma_n)$ considering packet errors and re-transmissions. 

We are now able to characterize the number of clients' correct responses $\tilde{N}$ within the preset time window $\epsilon$. Assuming only the correct responses received within $\epsilon$ will be recorded at the \gls{dai} server, the number of recorded correct responses from clients $\tilde{N}$ is a dependent random number on the total transmission time $\{TL_n\}_{n=1}^N$. Because the transmissions of all $N$ clients are mutually independent, the randomness of $\tilde{N}$ can be characterized by the probability mass function (PMF) infra:
\begin{equation}\label{pmfgeneral}
\begin{split}
\Phi_{\tilde{N}}(\eta)&=\mathbb{P}\left\lbrace \tilde{N}=\eta\right\rbrace\\
&=\sum_{\tilde{\mathcal{N}}(\eta)\subseteq \mathcal{N}}\left(\prod_{n\in\tilde{\mathcal{N}}(\eta)}F_{TL_n}(\epsilon)\right)\left(\prod_{n\in\mathcal{N}\setminus\tilde{\mathcal{N}}(\eta)}\hat{F}_{TL_n}(\epsilon)\right),
\end{split}
\end{equation}
where $\mathcal{N}$ is the full set of $N$ clients and $\tilde{\mathcal{N}}(\eta)$ is an arbitrary subset of $\eta$ clients that transmit correct responses within the given time window $\epsilon$; the summation operation is carried out over all $\binom{N}{\eta}$ subsets of $\eta$ clients.

Assuming all clients are homogeneous, which implies all their channel distribution parameters and other wireless setups to be identical, we have $\rho=\rho_1=\rho_2=\dots=\rho_N$. As a result, the number of clients' correct responses $\tilde{N}$ within the preset time window $\epsilon$ abides the binomial distribution with $N$ dependent trials and  success probability $r=1-\rho$. Therefore, we can reduce (\ref{pmfgeneral}) to be $\Phi_{\tilde{N}}(\eta)=\binom{N}{\eta}r^\eta (1-r)^{N-\eta}$. When the total number of clients $N$ is large, we can rely on the law of large numbers and have the following relation:
\begin{equation}
    \tilde{N}\approx \mathbb{E}\{\tilde{N}\}=Nr.
\end{equation}
Based on this simplification, although $r$ is defined as the probability that a packet can be correctly received within the time window, it quantitatively equals the ratio of activated clients for large $N$. For notational simplicity, we denote both measures by $r$ herein, unless otherwise specified.

\subsection{Analysis of Algorithmic Convergence of \gls{dai}}
In the previous subsection, we qualitatively analyzed that the time window can influence the ratio of activated agents and thus yields an effect on the algorithmic convergence of \gls{dai}. In this subsection, we present the quantitative analysis of the convergence rate, also known as the learning rate, with respect to the ratio of activated agents.
Although the internal processes can be understood from the abstraction given in (\ref{FL_eq}), it can hardly help for analytical formulations and derivations. Hence, for facilitating the following analysis of convergence, we begin with re-defining the mathematical problem as follows:

\begin{equation}
    \min _{\mathbf{w}}\left\{F(\mathbf{w}) \triangleq \sum_{n=1}^{N} p_{n} F_{n}(\mathbf{w})\right\},
\end{equation}
where $p_{n}$ is the weight of the $n$th client such that $p_{n} \geq 0$ and $\sum_{n=1}^{N} p_{n}=1.$ Suppose that the $n$th client holds the $s_{n}$ training items: $x_{n, 1}, x_{n, 2}, \cdots, x_{n, s_{n}}$; local objective function $F_{n}(\cdot)$ is defined as

\begin{equation}
    F_{n}(\mathbf{w}) \triangleq \frac{1}{s_{n}} \sum_{j=1}^{s_{n}} \ell\left(\mathbf{w} ; x_{n, j}\right),
\end{equation}
where $\ell(\cdot ; \cdot)$ is a user-specified loss function. The problem aims at minimizing the averaged loss value through minimizing the local objective function at each distributed device. Without losing of generality, we make some common assumptions for simplifying the analysis:
\begin{itemize}
    \item $F_{n}$ is $L$-smooth function, $\forall~n\in\mathcal{N}$;
    \item $F_n$ is $\mu$-strong convex function, $\forall~n\in\mathcal{N}$;
    \item The variance of stochastic gradients in each client is bounded $\sigma^2$;
    \item The expected squared norm of stochastic gradients is uniformly bounded by $G^{2}$.
\end{itemize}
Interested readers can refer to Appendix A for mathematical implications and the inherent rationality of these assumptions.

Taking the well-known FedAvg algorithm proposed in \cite{mcmahan2017communication} as an example, we describe the process of its $\tau$th round by utilizing the abstraction given in (\ref{FL_eq}). 
Firstly, the server broadcasts the latest model parameters $\mathbf{w}_{\tau},$ to all clients, and hence, the message $\mathcal{K}_\tau^n$ received at client $n$ is $\mathbf{w}_{\tau}$ assuming a perfect downlink channel.
Secondly, every client takes the received $\mathbf{w}_{\tau}$ as the update at beginning of the local round, i.e., $\mathbf{w}_{t}^{n}=\mathbf{w}_{\tau}$, and performs $E(\geq 1)$ local SGD updates based on its own dataset:
\begin{equation}\label{fulleqwtauplus1}
\begin{array}{c}
    \mathbf{w}_{t+i+1}^{n} \longleftarrow \mathbf{w}_{t+i}^{n}-\eta_{t+i} \nabla F_{n}\left(\mathbf{w}_{t+i}^{n}, \xi_{t+i}^{n}\right),
\end{array}
\end{equation}
for $i=0,1, \cdots, E-1$, where $\eta_{t+i}$ is the learning rate, and $\xi_{t+i}^{n}$ denotes the samples uniformly chosen from the local dataset at each SGD update round. 
Thirdly, after locally updating through $E$ steps, every client sends the latest model parameters to the central server. The message sent out from $n$th client $\mathcal{J}_n^{\tau+1}$ is represented by $\mathbf{w}_{t+E}^{n}$.
Last, the central server aggregates the local models received from clients $\{\mathcal{J}_1^{\tau+1}, \cdots, \mathcal{J}_{N}^{\tau+1}\}$ to produce a new global model $\mathbf{w}_{\tau+1}$ for the next round.

Because of non-iid data distribution and partial-client participation when applying \gls{dai} in realistic wireless environments, the aggregation step can vary. Ideally, if the server receives messages from all clients (a.k.a. full-client participation) before broadcasting, the aggregation could be
\begin{equation}
    \mathbf{w}_{\tau+1} \longleftarrow \sum_{n=1}^{N} p_{n} \mathbf{w}_{t+E}^{n}.
\end{equation}
Otherwise, the partial-client participation issue rises, which can lead to low training efficiency without taking proper countermeasures. 
Specifically, the server receives the first $K$ ($1 \leq K \leq N$) messages and stops to wait for the rest. Let $\mathcal{S}_{\tau}\left(\left|\mathcal{S}_{\tau}\right|=K\right)$
be the set of the indices of the responded clients in the $\tau$th round. Then, the aggregation with partial clients' responses is performed according to
\begin{equation}\label{pareqwtauplus1}
    \mathbf{w}_{\tau+1} \longleftarrow \frac{N}{K} \sum_{n \in \mathcal{S}_{\tau}} p_{n} \mathbf{w}_{t+E}^{n}
\end{equation}
Comparing (\ref{pareqwtauplus1}) with (\ref{fulleqwtauplus1}), it is obvious that the partial-client participation issue slows down the algorithmic convergence of \gls{dai} by reducing the number of aggregated samples. The convergence of the FedAvg algorithm has been well studied when the  required number of clients is constant in \cite{stich2018local, wang2018cooperative, li2019convergence}. 
Therefore, we focus our attention on the convergence when the number of required clients is changeable among communication rounds, which reflects the realistic scenario in wireless environments, especially when we set a small time window. Our analysis is based on the recent research of federated learning on Non-IID data \cite{li2019convergence}.

Assume that the server receives $\tilde{N}_t$ (say the $t$-th communication round) activated clients within the preset time window, and assume that the total number of communication rounds is $T$.
Let $\Delta_{t}\triangleq\mathbb{E}\left\|\overline{\mathbf{w}}_{t}-\mathbf{w}^{\star}\right\|^{2}$, defined as the expected distance to the optimum, where $\overline{\mathbf{w}}_{t}=\sum_{k=1}^{N} p_{k} \mathbf{w}_{t}^{k}$ is the weighted average of model parameters among all clients, and $\mathbf{w}^{\star}$ denotes the optimized model parameters. 

\begin{lemma}
\label{lemma:one_step}
Assume that the central server received $\tilde{N}_{t}$ activated clients in the preset time window. Define $\Gamma=F^{*}-\sum_{k=1}^{N} p_{k} F_{k}^{*}$ to quantify the degree of heterogeneity of non-iid distributions.
Letting $\Delta_{t}=\mathbb{E}\left\|\overline{\mathbf{w}}_{t+1}-\mathbf{w}^{\star}\right\|^{2}$, we have
\begin{equation}
    \Delta_{t+1} \leq\left(1-\eta_{t} \mu\right) \Delta_{t}+\eta_{t}^{2} (B + C_t)
\end{equation}
where
$B=\sum_{k=1}^{N} p_{k}^{2} \sigma_{k}^{2}+6 L \Gamma+8(E-1)^{2} G^{2}$, and $C_t = \frac{N-\tilde{N}_{t}}{N-1} \frac{4}{\tilde{N}} E^{2} G^{2}$. 
\end{lemma}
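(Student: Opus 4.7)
The plan is to follow the perturbed-iterate framework of \cite{li2019convergence} for FedAvg on non-IID data, but generalized so that the size of the responding set, $\tilde{N}_t = |\mathcal{S}_t|$, is allowed to vary across rounds. I would introduce two virtual sequences that describe what would have happened if every client had reported: the full-participation aggregate $\overline{\mathbf{w}}_{t+1} \triangleq \sum_{k=1}^{N} p_k \mathbf{w}_{t+1}^k$, and its deterministic-gradient counterpart $\overline{\mathbf{v}}_{t+1}$ obtained by replacing each $\nabla F_k(\mathbf{w}_t^k,\xi_t^k)$ by $\nabla F_k(\mathbf{w}_t^k)$. Conditioned on the local updates, the server iterate (\ref{pareqwtauplus1}) is an unbiased estimate of $\overline{\mathbf{w}}_{t+1}$ under the sampling of $\mathcal{S}_t$, and each stochastic gradient is unbiased for its population counterpart, so the two cross terms in the expansion of $\|\mathbf{w}_{t+1}-\mathbf{w}^\star\|^2$ vanish in expectation and we are left with the clean three-way split
\begin{equation}
\Delta_{t+1} = \mathbb{E}\|\mathbf{w}_{t+1} - \overline{\mathbf{w}}_{t+1}\|^2 + \mathbb{E}\|\overline{\mathbf{w}}_{t+1} - \overline{\mathbf{v}}_{t+1}\|^2 + \mathbb{E}\|\overline{\mathbf{v}}_{t+1} - \mathbf{w}^\star\|^2.
\end{equation}

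I would first handle the second and third pieces, which together produce the $B$ term and do not depend on $\tilde{N}_t$. The mini-batch variance piece is at most $\eta_t^2 \sum_k p_k^2 \sigma_k^2$ by per-client independence of the $\xi_t^k$ and the bounded-variance assumption. For the deterministic piece, I would expand $\overline{\mathbf{v}}_{t+1}-\mathbf{w}^\star$ as a weighted one-step gradient move, invoke $\mu$-strong convexity of each $F_k$ to extract the $(1-\eta_t\mu)\Delta_t$ contraction, and use $L$-smoothness together with the heterogeneity measure $\Gamma = F^\star - \sum_k p_k F_k^\star$ to control the residual cross term at the optimum, yielding the $6L\Gamma$ contribution. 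The $8(E-1)^2 G^2$ term then arises from separately bounding the client-drift quantity $\sum_k p_k \mathbb{E}\|\mathbf{w}_t^k - \overline{\mathbf{w}}_{t_0}\|^2$ accumulated over at most $E-1$ local SGD steps since the last synchronization at round index $t_0$, using the bounded-squared-gradient assumption and a non-increasing learning-rate argument within one round.

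Finally, the first piece produces $C_t$. Conditioned on the set $\{\mathbf{w}_{t+1}^k\}_{k=1}^N$, $\mathcal{S}_t$ is a uniform size-$\tilde{N}_t$ sample drawn without replacement from $\{1,\dots,N\}$, so the standard sampling-without-replacement variance identity gives
\begin{equation}
\mathbb{E}_{\mathcal{S}_t}\|\mathbf{w}_{t+1} - \overline{\mathbf{w}}_{t+1}\|^2 \leq \frac{N-\tilde{N}_t}{N-1}\,\frac{1}{\tilde{N}_t}\sum_{k=1}^{N}\|\mathbf{w}_{t+1}^k - \overline{\mathbf{w}}_{t+1}\|^2,
\end{equation}
and since each $\mathbf{w}_{t+1}^k$ deviates from the common synchronization point by at most $E$ local SGD steps, a triangle-plus-Young inequality combined with bounded gradient norms yields $\|\mathbf{w}_{t+1}^k - \overline{\mathbf{w}}_{t+1}\|^2 \leq 4\eta_t^2 E^2 G^2$, producing exactly $\eta_t^2 C_t$. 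The main obstacle I anticipate is keeping the local-step bookkeeping honest across the three pieces: essentially the same ``$E$ bounded-gradient local steps'' argument has to be reused both to bound drift inside the deterministic residual and to bound $\|\mathbf{w}_{t+1}^k - \overline{\mathbf{w}}_{t+1}\|^2$ inside the sampling variance, and any sloppy application of Young's inequality at either spot would either inflate the explicit constants ($6$, $8$, $4$) or leave behind stray factors of $N/\tilde{N}_t$ that would spoil the clean $B + C_t$ separation on the right-hand side.
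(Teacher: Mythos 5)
Your proposal is correct and takes essentially the same route as the paper: the paper's own ``proof'' of Lemma~\ref{lemma:one_step} is simply a deferral to \cite{li2019convergence}, and your argument is a faithful reconstruction of that reference's perturbed-iterate analysis (three-way decomposition via the virtual full-participation and deterministic-gradient sequences, strong convexity plus $\Gamma$ for the $(1-\eta_t\mu)\Delta_t + 6L\Gamma$ part, client-drift bounding for $8(E-1)^2G^2$, and the without-replacement sampling variance for $C_t$), extended in the obvious way to a round-dependent $\tilde{N}_t$. One bookkeeping slip: your displayed sampling-variance bound should carry the population normalization $\frac{N-\tilde{N}_t}{N-1}\frac{1}{\tilde{N}_t}\cdot\frac{1}{N}\sum_{k=1}^{N}\|\mathbf{w}_{t+1}^k-\overline{\mathbf{w}}_{t+1}\|^2$ (i.e., an average over $k$, not a sum), since otherwise a stray factor of $N$ survives and the claimed ``exactly $\eta_t^2 C_t$'' does not come out.
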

\begin{proof}
Please refer Appendix A for details. 
\end{proof}

Apparently, $C_t=0$ if and only if $\tilde{N}_{t}=N$. 
Because of this inequality, we are unable to directly obtain the optimal solution. Alternatively, we can find the bound on the solution by analyzing its supremum, where we use $\sup(\Delta_{t})$ to denote the supremum of $\Delta_{t}$ for $t={1,2,\dots, T}$,  given $\eta_{t-1}$ being the learning rate at the $(t-1)$th step. Besides, we let $\sup\sup(\Delta_{t})$to denote the supremum of $\Delta_{t}$ for $t={2,3,\dots, T}$, given $\Delta_{t-1}$ reaching its supremum $\sup(\Delta_{t-1})$ at the $(t-1)$th step with $\eta_{t-2}$ being the learning rate at $(t-2)$th step. With these denotations, it follows that
\begin{equation}
\begin{cases}
    \sup(\Delta_{t+1}) = \underset{\eta_t}{\min}\left[\left(1-\eta_{t} \mu\right) \Delta_{t}+\eta_{t}^{2} (B + C_t)\right]\\
    \sup\sup(\Delta_{t+1}) = \underset{\eta_t}{\min}\left[\left(1-\eta_{t}     \mu\right) \sup(\Delta_{t})+\eta_{t}^{2} (B + C_t)\right]
\end{cases}
\end{equation}
$\forall~t=1, 2,\dots, T-1$, by which we can determine the minimum by
\begin{equation}
\begin{cases}
    \sup(\Delta_{t+1}) = \Delta_{t} - \frac{\mu^2\Delta_{t}^2}{4(B + C_t)}\\
    \sup\sup(\Delta_{t+1}) = \sup(\Delta_{t})- \frac{\mu^2\sup(\Delta_{t})^2}{4(B + C_t)} .
\end{cases}
\end{equation}
For the quadratic function $f(x) = x-\frac{\mu^2x^2}{4(B+C)}$, we can obtain its maximum to be $\frac{B+C}{\mu^2}$ when $x=\frac{2(B+C)}{\mu^2}$ and derive $f(x_1) \leq f(x_2)$ when $x_1 \leq x_2 \leq \frac{2(B+C)}{\mu^2}$. As a result, letting $x=\Delta_{t-1}$, we know that $\sup(\Delta_t) \leq \frac{B+C_{t-1}}{\mu^2}$. Because of $B > C_t$, $\forall~t={1,2,\dots, T}$, we can derive the inequality $ \frac{B+C_{t-1}}{\mu^2} \leq \frac{2(B+C_t)}{\mu^2}$. Finally, we have $\sup(\Delta_{t+1}) \leq \sup\sup(\Delta_{t+1})$.

Recursively let 
\begin{equation}
    \Tilde{\Delta}_{t+1} = \underset{\eta_t}{\min}\left[\left(1-\eta_{t}\mu\right) \Tilde{\Delta}_{t}+\eta_{t}^{2} (B + C_t)\right],
\end{equation}
for $t = {0,1,\dots, T-1}$, and let $\Tilde{\Delta}_{0} = \Delta_{0}$. Given $t'<t$, it can be found that $\Tilde{\Delta}_{t}$ is the supremum of $\Delta_t$ by setting all its previous $\Delta_{t'}$ being the corresponding supremum. 
With the analysis above, we know that the supremum converges fastest when $\eta_t = \frac{\mu\Tilde{\Delta}_{t}}{2(B+C_t)}$. 
With the above analysis, we want to find the relations between the learning rates of partial device participation and full device participation conditions. The result is presented as follows.

\begin{lemma}
\label{lemma:linear_ratio}
Denote $\bar{\eta}_t$ to be the learning rate at communication round $t$ to guarantee the algorithm convergence when full devices are participated.
Let $r_t = \frac{\tilde{N}_t}{N}$ be the device participation ratio at communication round $t$. The convergence of the algorithm when partial devices are participated can be guaranteed by setting $\eta_t = r_t\bar{\eta}_t$.
\end{lemma}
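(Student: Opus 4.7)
The plan is to start from the single-round recursion obtained in Lemma~\ref{lemma:one_step}, namely $\Delta_{t+1} \leq (1-\eta_t\mu)\Delta_t + \eta_t^2(B+C_t)$, and show that substituting $\eta_t = r_t\bar{\eta}_t$ produces a recursion whose contraction-plus-noise behaviour is dominated by the corresponding full-participation recursion with $\bar{\eta}_t$. The starting point for the full-participation analysis is exactly the same inequality with $C_t \equiv 0$, so by hypothesis there is a learning-rate schedule $\bar{\eta}_t$ (for example the diminishing schedule $\bar{\eta}_t = \frac{2}{\mu(t+\gamma)}$ used in the convergence proof of Lemma~\ref{lemma:one_step} and in \cite{li2019convergence}) under which $\Delta_t$ is driven to zero at the standard $\mathcal{O}(1/t)$ rate.

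First I would insert $\eta_t = r_t\bar{\eta}_t$ into the recursion and expand $r_t^2 C_t$. Using the definitions $r_t = \tilde{N}_t/N$ and $C_t = \frac{N-\tilde{N}_t}{N-1}\,\frac{4E^2G^2}{\tilde{N}_t}$, a direct calculation gives
\begin{equation}
r_t^2\,C_t \;=\; \frac{\tilde{N}_t(N-\tilde{N}_t)}{N^2(N-1)}\,4E^2G^2 \;\leq\; \frac{E^2G^2}{N-1},
\end{equation}
where the bound comes from $\tilde{N}_t(N-\tilde{N}_t)\leq N^2/4$. Hence the inflation of the noise floor caused by partial participation, after the $r_t^2$ rescaling, is of order $1/N$, and in particular is uniformly bounded by a constant absorbable into the constant $B$ appearing in the full-participation analysis. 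Next I would control the contraction coefficient: since $r_t\leq 1$, the factor $1-r_t\bar{\eta}_t\mu$ is weaker than $1-\bar{\eta}_t\mu$, but because the additive noise term also carries the factor $r_t^2\leq 1$, the ratio of the stochastic term to the linear-drift term is preserved up to the negligible correction above.

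With these two ingredients in hand, the closing step is to unroll the recursion. I would invoke the standard technical lemma used in \cite{li2019convergence} (growth of the sequence $v_t = \alpha/(t+\gamma)$ under a recursion of the form $\Delta_{t+1}\leq (1-\eta_t\mu)\Delta_t + \eta_t^2 \tilde{B}$) with $\tilde{B} = B + \sup_t (E^2G^2/(N-1))$, showing that the partial-participation $\Delta_t$ inherits the same $\mathcal{O}(1/t)$ envelope, possibly inflated by a harmonic-type factor depending on $r_{\min}=\min_t r_t$. The main obstacle will be this last book-keeping step: because $r_t$ is time-varying, the induction hypothesis used to propagate the $\mathcal{O}(1/t)$ bound must be adjusted so that the multiplicative factor $r_t$ does not cause the telescoped sum $\sum_\tau \eta_\tau^2 = \sum_\tau r_\tau^2\bar{\eta}_\tau^2$ to fall below what is needed for the contraction to dominate. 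I expect this to require either a uniform lower bound $r_t\geq r_{\min}>0$ or a Robbins--Monro-type assumption on $\sum_t r_t\bar{\eta}_t$ diverging while $\sum_t r_t^2\bar{\eta}_t^2$ remains summable, and the precise statement of the guaranteed rate will depend on which of these is taken.
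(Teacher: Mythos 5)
Your route is genuinely different from the paper's, and it is worth being clear about what each does. The paper does not substitute $\eta_t = r_t\bar{\eta}_t$ into the recursion and unroll it; instead it works backward from the step size that minimizes the one-step supremum bound, $\eta_t = \frac{\mu\tilde{\Delta}_t}{2(B+C_t)}$, identifies $\bar{\eta}_t = \frac{\mu\tilde{\Delta}_t}{2B}$ as the full-participation counterpart, and writes $\eta_t = \bar{\eta}_t\frac{B}{B+C_t} = \bar{\eta}_t\bigl[1+\varepsilon\frac{N-\tilde{N}_t}{\tilde{N}_t}\bigr]^{-1}$ with $\varepsilon = \frac{C_t}{B}\cdot\frac{\tilde{N}_t}{N-\tilde{N}_t} = \frac{4E^2G^2}{B(N-1)}$, which is indeed independent of $\tilde{N}_t$. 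The factor $r_t$ then appears only after \emph{stipulating} $\varepsilon=1$, i.e., assuming $B(N-1)=4E^2G^2$; this is a modelling simplification rather than a derivation, so the paper's proof is really a heuristic scaling argument, two lines long, that never touches the unrolling of the recursion. Your forward approach is more principled: the computation $r_t^2 C_t = \frac{\tilde{N}_t(N-\tilde{N}_t)}{N^2(N-1)}4E^2G^2 \leq \frac{E^2G^2}{N-1}$ is correct and is a genuinely useful observation the paper does not make, since it shows the rescaled noise floor is uniformly $\mathcal{O}(1/N)$ and absorbable into $B$.

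The gap in your proposal is exactly the one you flag and do not close: with $\eta_t = r_t\bar{\eta}_t$ the contraction factor becomes $1-r_t\bar{\eta}_t\mu$, and the standard induction $\Delta_t\leq \frac{v}{\gamma+t}$ from \cite{li2019convergence} does not go through for time-varying $r_t$ without either a uniform lower bound $r_t\geq r_{\min}>0$ (which degrades the constant by a factor $1/r_{\min}$) or a Robbins--Monro condition on $\sum_t r_t\bar{\eta}_t$. Neither hypothesis appears in the lemma statement, so as written your argument establishes the claim only conditionally. To be fair, the paper's own proof does not establish it unconditionally either --- it replaces this difficulty with the $\varepsilon=1$ stipulation --- but if you want your version to stand as a proof you must either add the lower-bound assumption on $r_t$ explicitly or carry out the modified induction and state the resulting (weaker) rate. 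Note also that the paper's subsequent theorem quietly assumes the issue away by plugging $\eta_t=\frac{2r_t}{\mu(\gamma+t)}$ into the \emph{full-participation} induction, so your criticism applies there as well.
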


\begin{proof}
Hint: By analyzing the relation of learning rates between $C_t=0$ and $C_t>0$, we can find an equation to combine the two conditions. Please refer Appendix A for details. 
\end{proof}

Withe analysis of Lemma \ref{lemma:one_step} and \ref{lemma:linear_ratio}, we can begin to analysis the convergence rate in the wireless environments as follows.

\begin{theorem}
Let the assumptions hold and $L, \mu, \sigma_{k}, G$ be defined therein. Choose $\kappa=\frac{L}{\mu}$, $\gamma=\max \{8 \kappa, E\}$ and the learning rate $\eta_{t}=\frac{2r_t}{\mu(\gamma+t)} .$ Then FedAvg algorithm in wireless environments satisfies
$$
\mathbb{E}\left[F\left(\mathbf{w}_{T}\right)\right]-F^{*} \leq \frac{2 \kappa}{\gamma+T}\left(\frac{B+D}{\mu}+2 L\left\|\mathbf{w}_{0}-\mathbf{w}^{*}\right\|^{2}\right),
$$
where $B=\sum_{k=1}^{N} p_{k}^{2} \sigma_{k}^{2}+6 L \Gamma+8(E-1)^{2} G^{2}$, and $D = 4E^2G^2$.
\end{theorem}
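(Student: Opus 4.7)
The plan is to chain the one-round recursion of Lemma~\ref{lemma:one_step} over $T$ rounds to obtain an $\mathcal{O}(1/T)$ bound on the mean-square distance $\Delta_T = \mathbb{E}\|\overline{\mathbf{w}}_T - \mathbf{w}^\star\|^2$, and then convert this to a bound on the optimality gap via $L$-smoothness, which gives the standard estimate $\mathbb{E}[F(\mathbf{w}_T)] - F^\star \leq \frac{L}{2}\Delta_T$. My first simplification is to strip the time dependence out of the round-noise: from the definition $C_t = \frac{N - \tilde{N}_t}{N-1}\cdot\frac{4}{\tilde{N}_t}E^2 G^2$ and the fact that at least one client must respond ($\tilde{N}_t \geq 1$), one has $C_t \leq 4E^2 G^2 = D$, so Lemma~\ref{lemma:one_step} tightens to the cleaner recursion
\begin{equation*}
    \Delta_{t+1} \leq (1 - \eta_t \mu)\Delta_t + \eta_t^2 (B + D).
\end{equation*}

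With the prescribed step size $\eta_t = \frac{2 r_t}{\mu(\gamma + t)}$, which is precisely the $r_t\bar{\eta}_t$ form authorised by Lemma~\ref{lemma:linear_ratio} with $\bar{\eta}_t = \frac{2}{\mu(\gamma+t)}$ the standard FedAvg schedule, I then prove by induction on $t$ that
\begin{equation*}
    \Delta_t \leq \frac{v}{\gamma + t}, \qquad v \triangleq \max\Bigl\{\tfrac{4(B+D)}{\mu^2},\,(\gamma+1)\Delta_0\Bigr\}.
\end{equation*}
The base case holds by construction of $v$; the inductive step substitutes the recursion, places the two terms over the common denominator $(\gamma+t)^2$, and closes using the elementary identity $(\gamma+t-1)(\gamma+t+1) \leq (\gamma+t)^2$ together with $v \geq 4(B+D)/\mu^2$. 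The condition $\gamma \geq 8\kappa$ guarantees that $\eta_t \leq \frac{1}{4L}$, the smallness requirement that was tacitly used in Lemma~\ref{lemma:one_step} for the descent step to be contractive. Finally, $L$-smoothness yields $\mathbb{E}[F(\mathbf{w}_T)] - F^\star \leq \frac{L v}{2(\gamma + T)}$, and expanding $v$ and substituting $\kappa = L/\mu$ matches the stated form $\frac{2\kappa}{\gamma+T}\bigl(\frac{B+D}{\mu} + 2L\|\mathbf{w}_0 - \mathbf{w}^\star\|^2\bigr)$.

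The main obstacle is the algebraic bookkeeping in the inductive step, where the time-varying ratio $r_t \in (0,1]$ appears in both the contraction factor $(1 - \eta_t\mu) = 1 - \frac{2r_t}{\gamma + t}$ and in the noise factor $\eta_t^2(B+D) = \frac{4 r_t^2(B+D)}{\mu^2(\gamma + t)^2}$. Naively these two dependencies do not combine into a clean $\mathcal{O}(1/(\gamma+t))$ bound for arbitrary $r_t$, and this is exactly why Lemma~\ref{lemma:linear_ratio} is needed: it was constructed to balance the $r_t$-scaling of the step size against the $1/\tilde{N}_t$-scaling of $C_t$, so that the effective analysis behaves as if the full-participation schedule $\bar{\eta}_t$ were being applied throughout, with the single-constant upper bound $B+D$ replacing the time-varying $B+C_t$. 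Any subtlety that remains will be in verifying that the base case and the induction constant $v$ are still well-defined uniformly in the realized sequence $(r_t)$, which should follow from $r_t \leq 1$ and the uniform upper bound $C_t \leq D$.
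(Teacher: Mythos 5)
Your proposal follows essentially the same route as the paper's own proof: establish the recursion from Lemma~\ref{lemma:one_step}, prove $\Delta_t \leq v/(\gamma+t)$ by induction with $v = \max\{4(B+D)/\mu^2, (\gamma+1)\Delta_0\}$, convert via $L$-smoothness, and invoke Lemma~\ref{lemma:linear_ratio} to justify the $r_t$-scaled step size in the partial-participation case (the paper merely defers the substitution $C_t \leq D = 4E^2G^2$ to the final line rather than applying it up front, which is immaterial). You even correctly identify the same soft spot the paper leaves implicit, namely that the induction closes cleanly only for the full-participation schedule and the extension to time-varying $r_t$ rests on the heuristic balancing in Lemma~\ref{lemma:linear_ratio}.
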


\begin{proof}
Hint: Assume $C_t=0$ and from Lemma \ref{lemma:one_step}, find the bound of $\Delta_t$ by induction. Apply the assumptions on $F$, find the relations between $F(\mathbf{w}_t)$ and $\Delta_t$. Combining with Lemma \ref{lemma:linear_ratio} to find the learning rate in wireless environments. Please refer Appendix A for details.
\end{proof}

\section{Experiments}\label{a}

In this section, we take the well-known FedAvg algorithm as an example to validate the effectiveness of the proposed system. In particular, we systematically evaluate the performance of FedAvg with different parameter settings, while the parameters can be roughly split into two categories: model-related hyper-parameters and system-related parameters. The target of a series of experiments is to study the accuracy, efficiency, robustness, and fairness of a given algorithm based on our proposed system. Besides, we also validate the scalability of the system.

\subsection{Experiment Setup}
To demonstrate the generality of our proposed system, we consider two completely different tasks on the PyTorch platform. 
The first one is a multi-class image classification problem for digital recognition, and the second one is a regression problem for wireless traffic prediction \cite{zhangcl2018, zhangjsac2019, zhanginfocom2021}.
We perform the first task on the MNIST dataset \cite{Lecun_mnist}. 
This dataset is one of the most classical ones in the ML/DL realm and has been widely applied in the literature.
For the multi-class classification problem, we attempt to predict which class the input image belongs to, and the prediction accuracy is adopted as the evaluation metric. 
In the experiment, the model architecture adopted for this task is described as follows: A CNN with two $5\times 5$ convolution layers (the first layer with 10 channels, and the second layer with 20 channels; each followed by a $2\times 2$ max pooling and the rectified linear unit (ReLU) activation function), a fully connected layer with 50 units utilizing the ReLU activation function for neural computing, and a final softmax output layer \cite{mcmahan2017communication}. The total number of the applied model parameters equals 1199882.
The initial learning rate is set to unity, with an exponential decay rate at 0.9 every 5 local training steps.

We perform the second task on the Call Detail Record (CDR) dataset from `Telecom Italy Open Big Data Challenge' \cite{barlacchi2015multi}.
The CDR dataset contains three kinds of wireless traffics from different cells: The number of text messages, the number of calls, and the number of Internet data packages.
For this problem, we attempt to predict the future traffic volume of a cell, given the historical traffic volumes, and the mean square error (MSE) is adopted as the evaluation metric. 
In the experiment, the model architecture adopted includes a stacked long short-term memory (LSTM) structure with two LSTM layers (each layer with 64 hidden units) and a fully connected layer with a single output. The total number of the applied model parameters equals 12961.
The initial learning rate is set to be\footnote{Note that the models and (hyper-)parameters we adopted here are rather straightforward since the design and optimization of network architecture and (hyper-)parameters are out of the scope of this paper.} 0.05, with an exponential decay rate at 0.9 every 5 local epochs.

We assume that all the clients connect to the server through wireless links.
In the following experiments, if without further annotations, we assume all computing clients are located randomly in several wireless cells. Each cell is simulated within one CPU core process, while the server is simulated in another independent core process.
Within each cell, we assume that there are a limited number of computing clients randomly walking in a squared area and communicating with the server through an AP node. We further assume the wireless channel model for each client to be the constant speed propagation delay model and log distance propagation loss model. 
Without losing generality, we assume that each AP node connects to the server through a virtual point-to-point link with a limited data rate and delay. 
For the sake of simplicity, we assume that all clients in every cell have the same system configurations and adopt the suggested channel and system parameters in \cite{carroll2010analysis}, which are listed in Table \ref{nomensvalue}.

The computing time is closely related to the CPU frequency, IO throughputs, memory cache, and the existing tasks running on the agent's device and thereby hard to mathematically formulate.
To simulate the computing time of agents precisely, we assume it to be 10 times the computing time on our computational platform, which is a workstation with two physical CPUs, 20 core processes per CPU, and 256 GB memory cache.
To avoid interference from the existing tasks running on the workstation, we simultaneously build the simulations for each experiment to keep the same operational conditions.
These system configurations are fixed unless otherwise specified.

\begin{table}[!t]
\renewcommand{\arraystretch}{1.3}
\caption{Suggested values and ranges as well as explanations of independent channel distribution and system configuration parameters.}
\label{nomensvalue}
\centering

\begin{tabular}{|p{1.2cm}|p{6.5cm}|}
\hline
\textit{Parameters} & \textit{Suggested values, ranges, and explanations}\\
\hline
$B_n$ & Bandwidth of wireless communications, 10-29 Mbps for 802.11g, 150 Mbps for 802.11n, 3-32 Mbps for 802.11a, 210 Mbps - 1 Gbps for 802.11ac; We choose 10 Mbps here.\\
\hline
$PT_n$ & Wireless transmission power, 80-720 mW for WiFi modulem; we choose 720 mW here.\\
\hline
$S_n$ & Communication packet size, usually less than 64K bits for UDP and TCP protocols; We choose 1K bits here.\\
\hline
$G_n$ & 1.4125-2.2387 for LoS links, 2.1878-3.0549 for non-LoS links; We choose 1 here.\\
\hline
$M$ & 1-6 clients per cell; We choose 4 clients per cell here.\\
\hline
$datarate$ & Data rate of P2P channel between server and Ap node; We choose 500 Mbps here.\\
\hline
$delay$ & Delay of P2P channel between server and Ap node; We choose 20 ms here.\\
\hline
\end{tabular}
\end{table}

\subsection{Accuracy}
\begin{figure*}[!t]
    \centering
    \begin{subfigure}[t]{0.5\textwidth}\label{mnist_acc}
        \centering
        \includegraphics[width=3.4in]{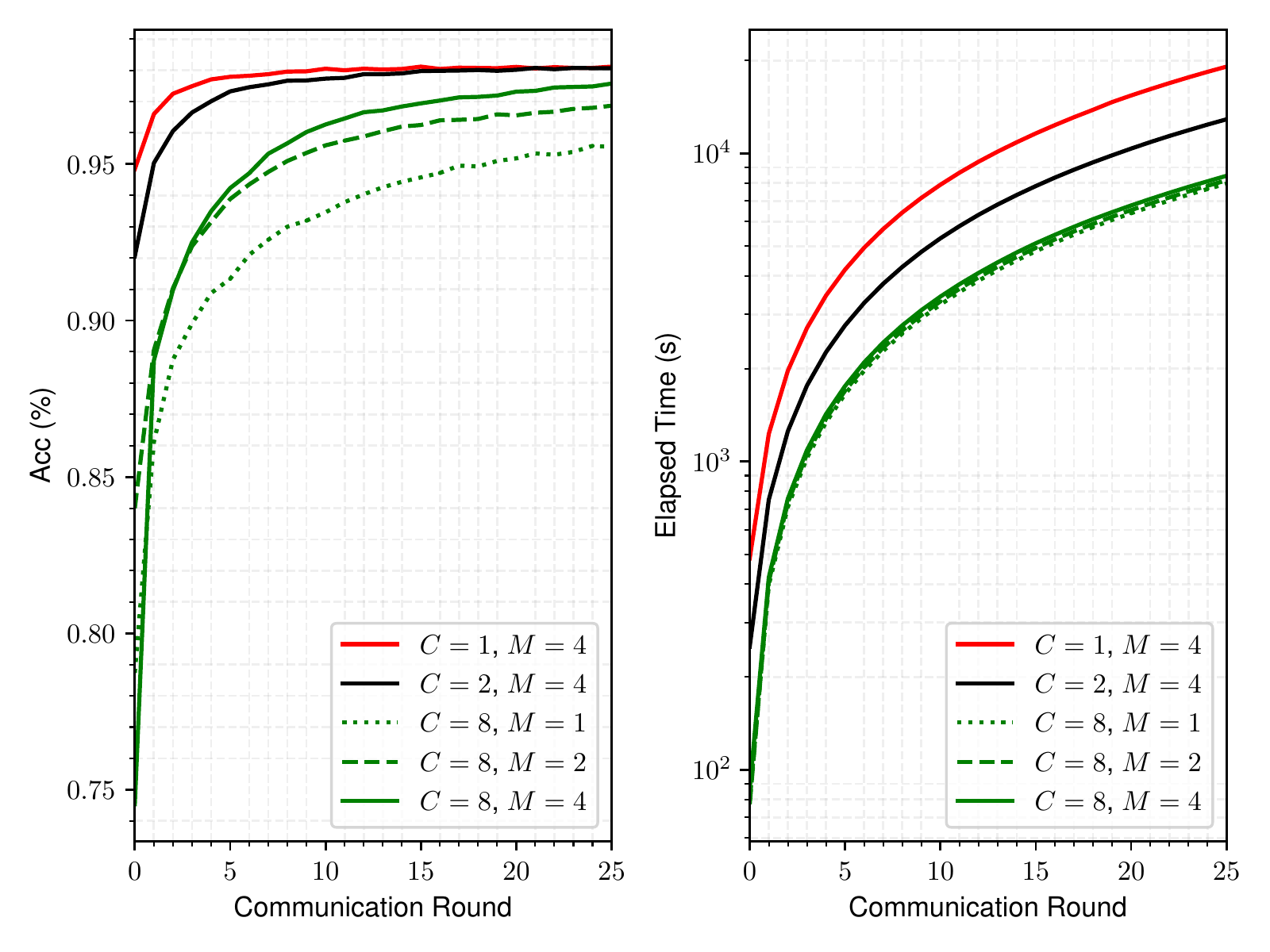}
        \caption{Image classification task.}
    \end{subfigure}%
~
    \begin{subfigure}[t]{0.5\textwidth}\label{traffic_acc}
        \centering
        \includegraphics[width=3.4in]{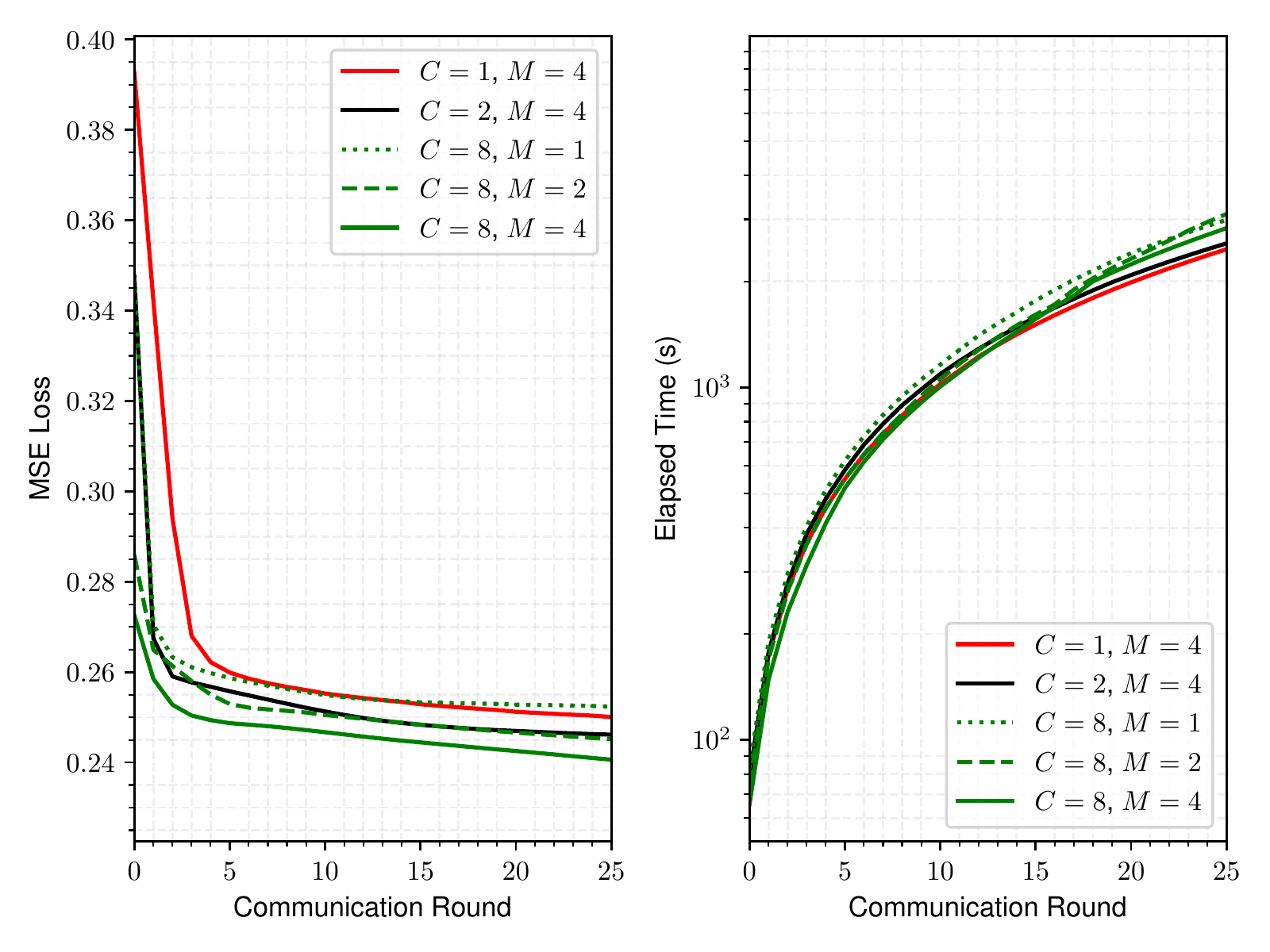}
        \caption{Traffic prediction task.}
    \end{subfigure}%
    
    \caption{Accuracy and time of two real-world ML tasks versus epochs, given different number of WiFi cells and different number of activated clients in each cell.}
    \label{performance_comparisons}
\end{figure*}

In this subsection, we present the overall prediction performance of our simulator. The experiments are conducted as follows. We set the number of cells to 1, 2, and 8, respectively. Each cell is simulated in a single CPU core. Within each cell, we assume there are 4 activated agents. 
Besides, we also consider that the number of active agents per cell to be 1 and 2 when the number is 8. 
Thus, we have 5 scenarios in total. 
For each scenario, we assume that each agent has a sub-dataset with the same size and distribution.
Furthermore, we assume that the image classification task in different experiments has the same size as the whole dataset. However, we assume that the sub-dataset size for the traffic prediction task is constant, implying that the whole dataset size increases with the number of clients. We also stipulate different learning rates according to Lemma \ref{lemma:linear_ratio} for different activated ratio scenarios.
Specifically, we set the learning rate ratio to be the sub-dataset size dividing the whole dataset size for each client.

\begin{figure*}[!t]
    \centering
    \begin{subfigure}[t]{0.5\textwidth}\label{mnist_loss}
        \centering
        \includegraphics[width=3.4in]{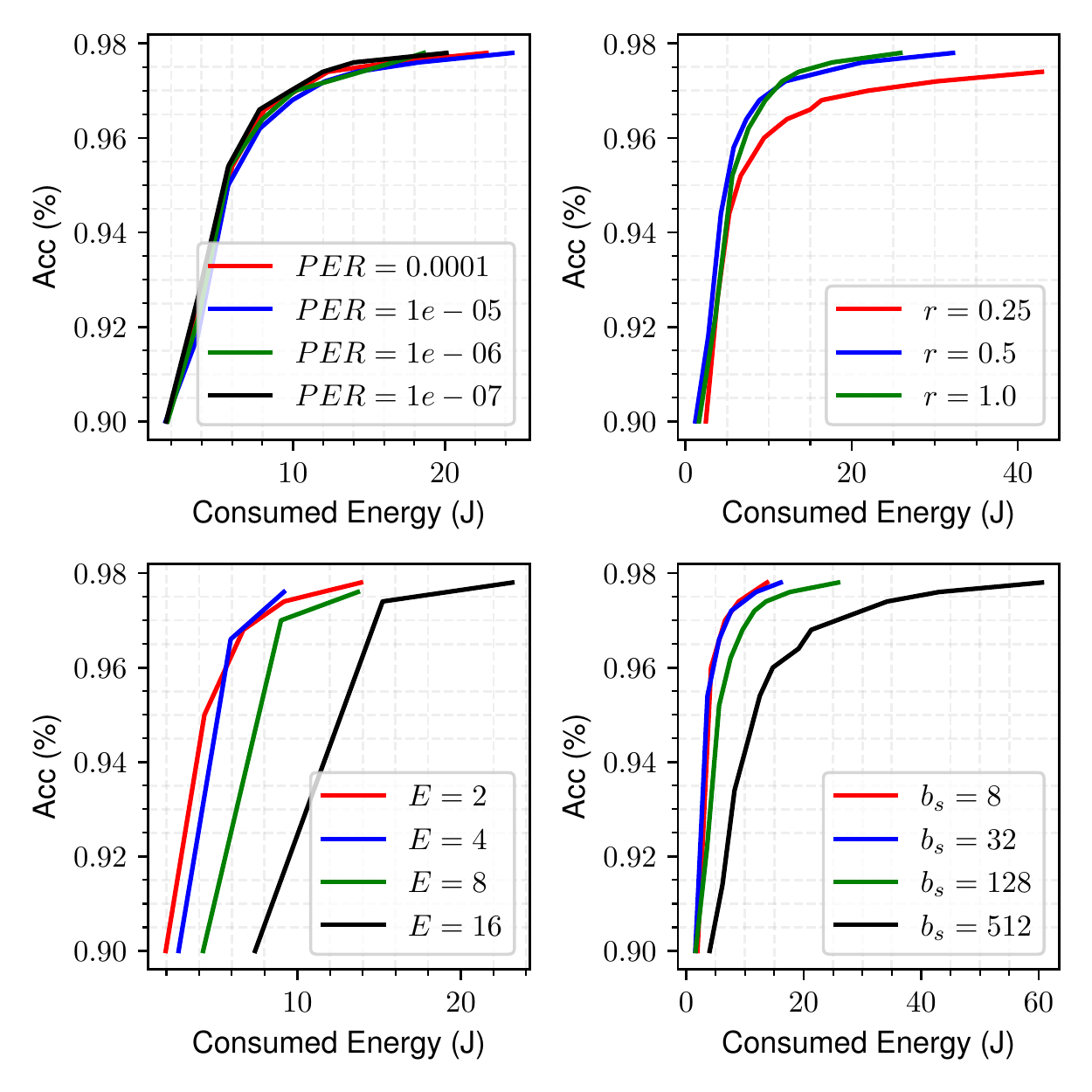}
        \caption{Image classification task.}
    \end{subfigure}%
~
    \begin{subfigure}[t]{0.5\textwidth}\label{traffic_loss}
        \centering
        \includegraphics[width=3.4in]{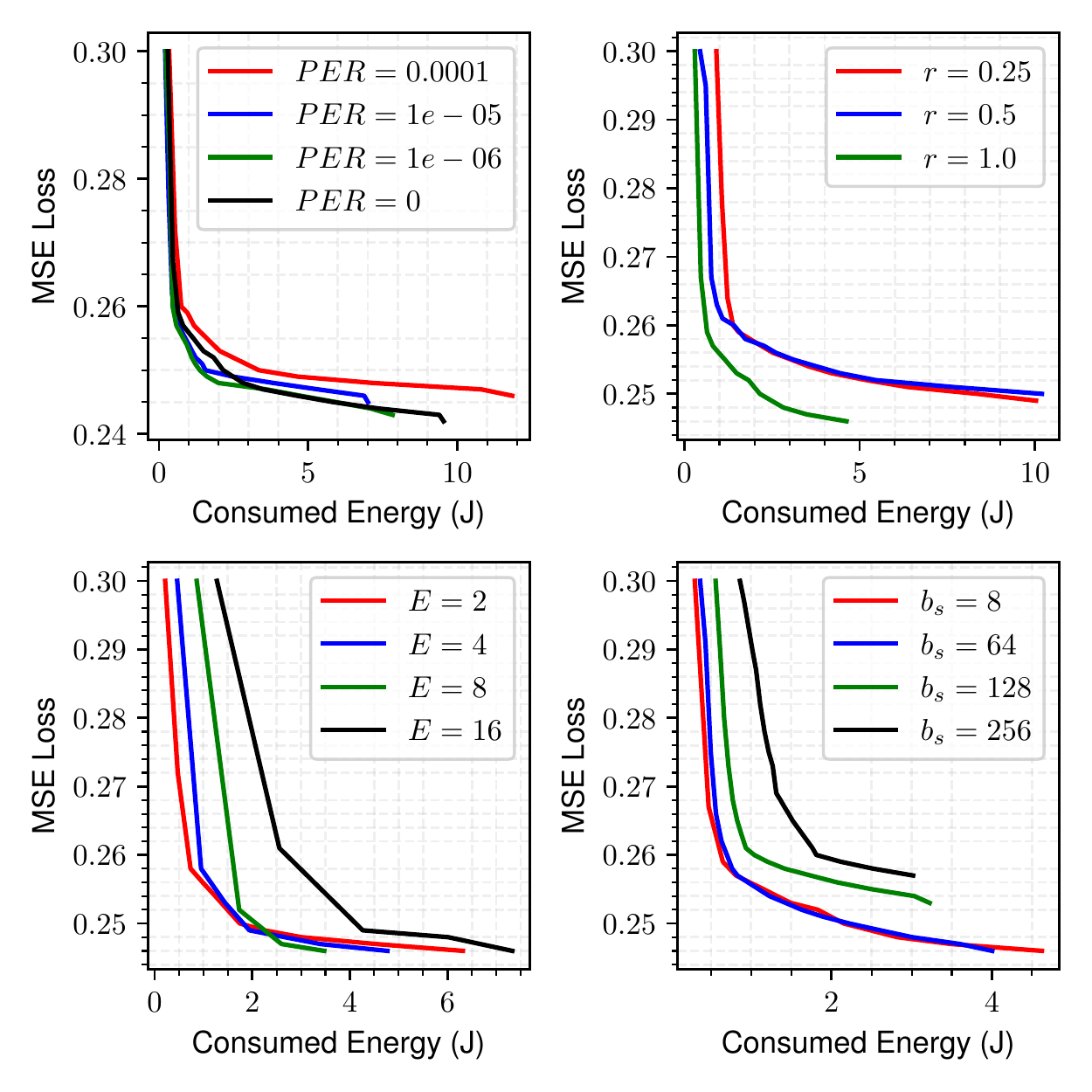}
        \caption{Wireless traffic prediction task.}
    \end{subfigure}
    \caption{Test dataset performance of two real-world ML tasks versus energy with four variables: Error rate, active ratio, local epochs, and batch size.}
    \label{acc_energy}
\end{figure*}


We utilize the accuracy and MSE loss on an independent test dataset to represent the performance of both image classification and wireless traffic prediction tasks, respectively.
To clearly present results the figures, we utilize different colors of red, black, and green to denote the cases corresponding to the number of cells of 1, 2 and 8, respectively.
Besides, we utilize dot-line, dash-line, and solid-line to represent the cases with the number of active agents per cell of 1, 2, and 4, respectively.
As the results presented in both subfigures of Fig.~\ref{performance_comparisons}, we draw two sets of lines to represent the results of performance and time versus the number of training rounds. 

From Fig.~\ref{performance_comparisons}, it is clear to observe that the scenario with 1 cell and 4 active agents achieves the best performance among all scenarios in the image classification task, while in the wireless traffic prediction task, the scenario with 8 cells and 4 active agents per cell outperforms. However, for the scenarios with the same number of cells in both tasks, the more active agents per cell will lead to better performance.
These phenomenons can be applied to explain both the weakness and strength of the FedAvg algorithm. When the number of active clients per cell equals 4, no matter how many cells are utilized for training, the whole training dataset keeps unchanged for the image classification task. The mathematical theory has proved the convergence of FedAvg. However, it does not perform as well as a centralized algorithm in practice. At least, it converges slower than a centralized algorithm. 
Nevertheless, the conclusion is just the opposite in the wireless traffic prediction task. The dataset is not static, and the more cells are utilized in the training phase, the larger the training dataset is. A larger training dataset generally yields better prediction performance. The FedAvg algorithm, as a result of this, works better with a large number of cells, which reflects the negative influence caused by increasing the number of cells.

As for the consumed time, the two tasks perform differently as usual. 
For the image classification task, the scenarios with a small number of cells spend a large amount of time to finish the same number of rounds. In contrast, for the traffic prediction task, the conclusion is the opposite. 
The reason is that the computing phase takes a dominant position compared with the communication phase in the image classification task. Therefore, the scenarios with a small number of cells spend more time on the computation part than the scenarios with many cells.
In the traffic prediction task, the computational time consumed is almost the same for all cells, as they have the same size of sub-dataset for training. Therefore, the scenarios with a large number of cells need more time for communications than the scenarios with a small number of cells, which causes the opposite results to the image classification task.

\subsection{Efficiency}
In this subsection, we study the factors that affect the efficiency of the FedAvg algorithm.
We define the efficiency of our system as the energy and time consumed for a task to reach the termination condition. Specifically, we set the termination condition for our experiments when the FedAvg algorithm reaches an accuracy or a loss threshold.
The studied variables include the receiver error rate, the agent activated ratio, the number of local training epochs, and the training batch size. Some other variables may also affect the efficiency, but we only study variables as mentioned above due to their dominant and direct impacts.

In particular, we set the accuracy thresholds for the image classification task starting from 0.9 and ending at 0.98 with 0.002 as steps and set the loss thresholds for the traffic prediction task starting from 0.3 and ending at 0.245 with -0.001 as steps.
We accumulate the consumed energy and time for each threshold to draw simulation curves. From previous experience, the simulation curves would present a ladder shape if the energy has not changed between two consecutive thresholds. Therefore, we only keep the first result if the energy value is constant among several consecutive thresholds. The performance on energy and time with different settings are presented in Fig. \ref{acc_energy} and Fig. \ref{acc_time}. We discuss the simulation results for two tasks affected by different settings separately as follows.

\begin{figure*}[!t]
    \centering
    \begin{subfigure}[t]{0.5\textwidth}\label{mnist_loss}
        \centering
        \includegraphics[width=3.4in]{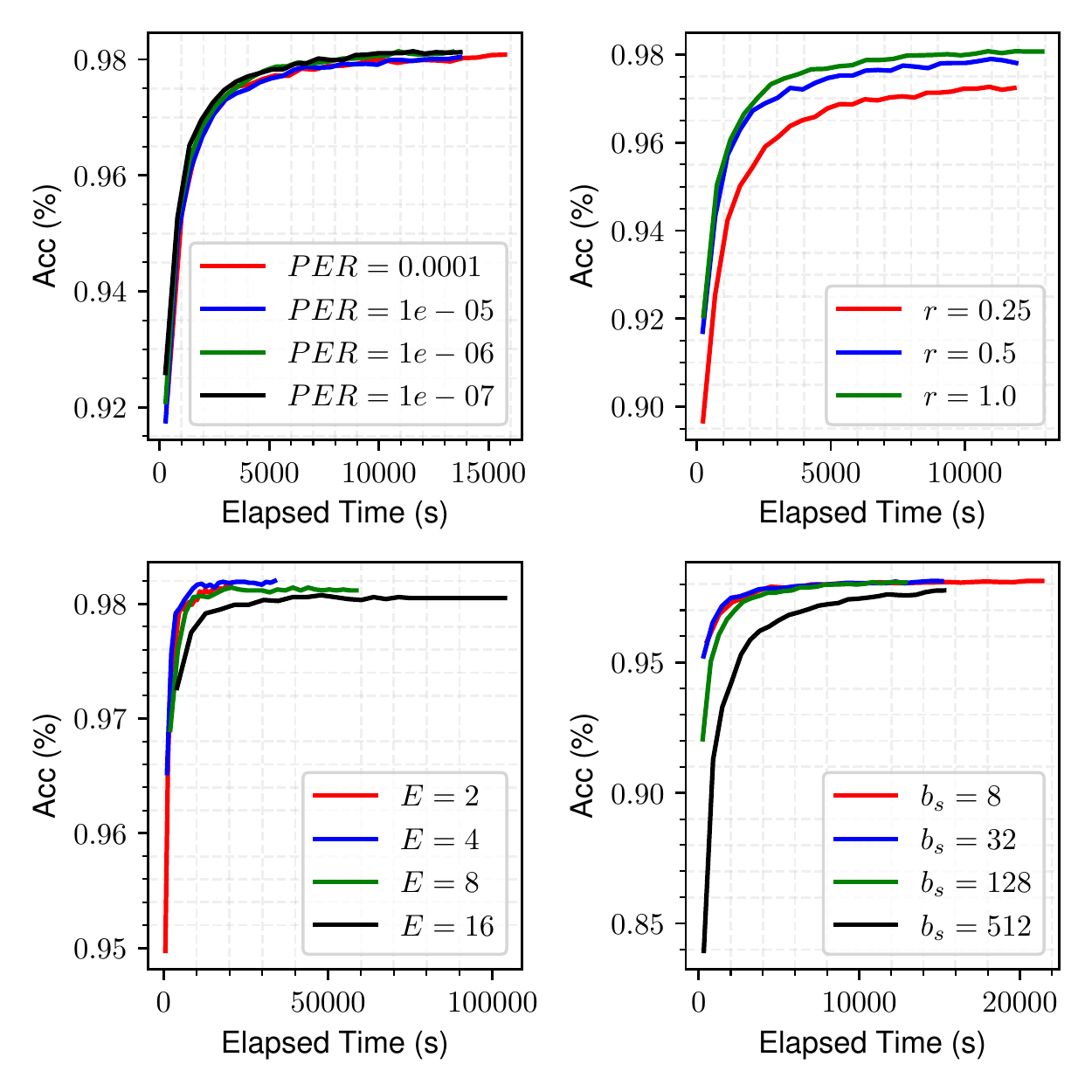}
        \caption{Image classification task.}
    \end{subfigure}%
~
    \begin{subfigure}[t]{0.5\textwidth}\label{traffic_loss}
        \centering
        \includegraphics[width=3.4in]{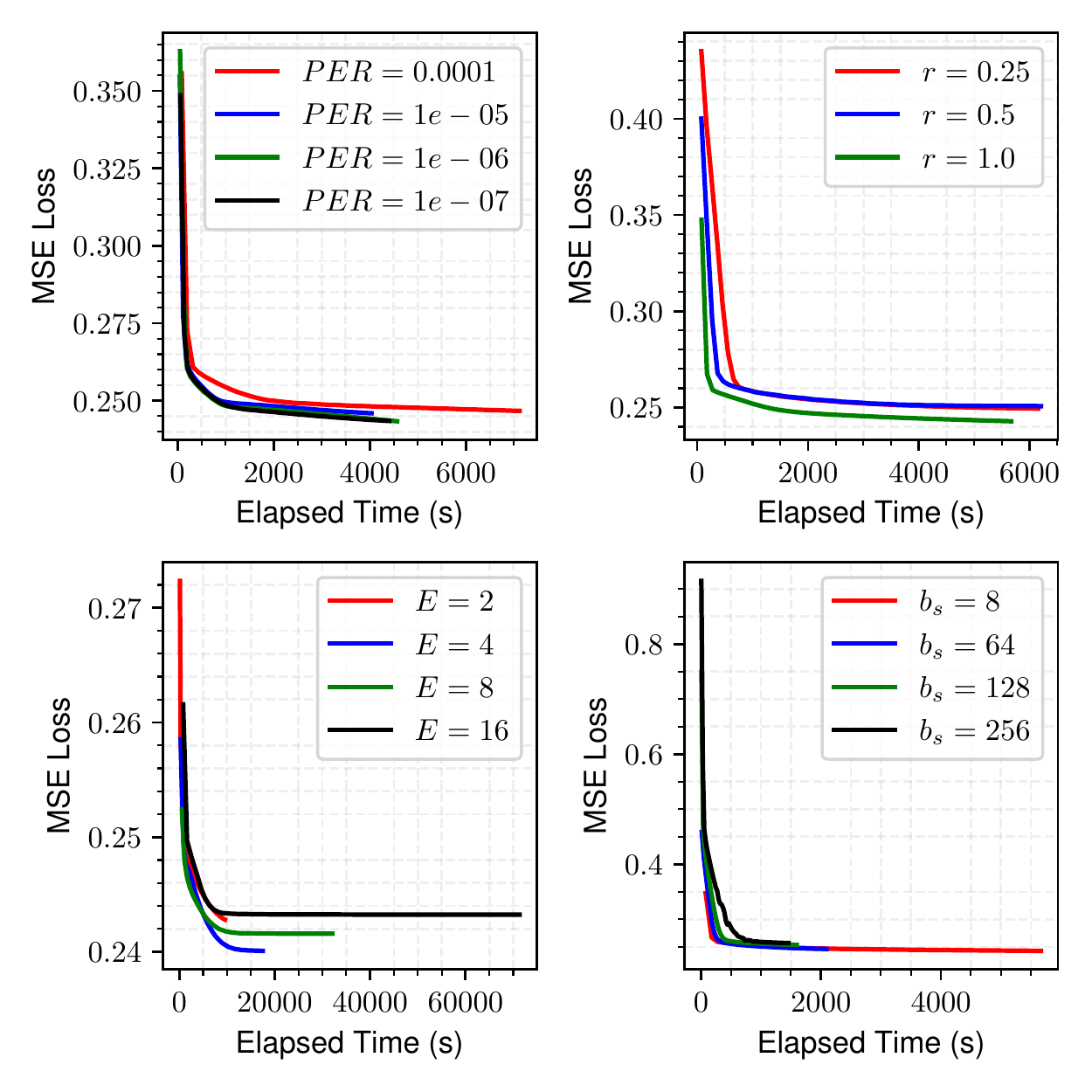}
        \caption{Wireless traffic prediction task.}
    \end{subfigure}
    \caption{Test dataset performance of two real-world ML tasks versus time with four variables: Error rate, active ratio, local epochs, and batch size.}
    \label{acc_time}
\end{figure*}


The packet error rate is chosen from $[1e-4, 1e-5, 1e-6, 1e-7]$. The figures show that varying packet error rate does not affect the test dataset convergence, as they reach the same maximum accuracy or minimum MSE loss. However, it makes sense and can also be observed that a significant error rate will considerably increase the energy and time to reach convergence.

As represented in the figures, a large active ratio performs better than a small active ratio. The reason is that compared to a small active ratio, a larger one has more datasets involved in the training phase, which makes the test performance reach the same value while consuming less energy. As for the time consumed, the conclusion is not so clear. A larger training dataset generally converges faster than a smaller one. However, a large active ratio may increase the time for communications, resulting in an increase in the total time consumed. Although the figures in our experiments present that a larger active ratio consumes less energy and time to reach the same test performance, we cannot conclude that a large active ratio will always be helpful.

The number of local epochs refers to the number of training epochs for each client during the training phase. A generally accepted common knowledge is that increasing the number of training epochs will significantly decrease the communication over computation ratio and require fewer communication rounds to complete the same total number of epochs. This will lead to faster convergence than that with a small number of local epochs. However, the results present a counter-intuitive conclusion. There might be two reasons for this phenomenon. First, the computation time takes a significant ratio of a complete round compared to the communication time. Second, it depends on the algorithm, and when the number of local epochs reaches a threshold, further increasing it will not accelerate the convergence of corresponding tasks.

The simulation results also show that the batch size only affects the training phase. The optimal batch size to reduce energy and time cost for one round depends on the specific tasks and the computation power of the agents. In our experiments, as shown in the figures, 32 is the best choice for the image classification task among all other options, while 64 is the best for the traffic prediction task.

\subsection{Robustness}
Any practically implementable algorithm must be robust to malicious users in reality \cite{bhagoji2019analyzing, bagdasaryan2020backdoor}. Based on our system, we carry out experiments on the FedAvg algorithm to validate its robustness to malicious agents. We assume that the agents are malicious and spam erroneous data to the central server, while the erroneous data in the following simulations is produced by adding a Gaussian noise to the original data. It is worth noting that the added noise strength must less than a threshold, otherwise, the central server can easily distinguish the malicious agent by comparing it with the average value and will reject the malicious data. We set up the experiments by considering two kinds of noise: Additive noise and multiplicative noise. The additive noise is generated as $w_{\mathrm{noise:a}} = w+\mathcal{N}(0, NIS_\mathrm{a})$, and the multiplicative noise is generated as $w_{\mathrm{noise:m}} = w\times(1+\mathcal{N}(0, NIS_\mathrm{m}))$, where $w$ is a model parameter capturing the baseline of the correct data, and $\mathcal{N}(0, NIS)$ is a zero-mean and $NIS_{a/m}$-deviation Gaussian distributed random variable.

\begin{figure*}[!t]
    \centering
    \begin{subfigure}[t]{0.5\textwidth}\label{mnist_loss}
        \centering
        \includegraphics[width=3.4in]{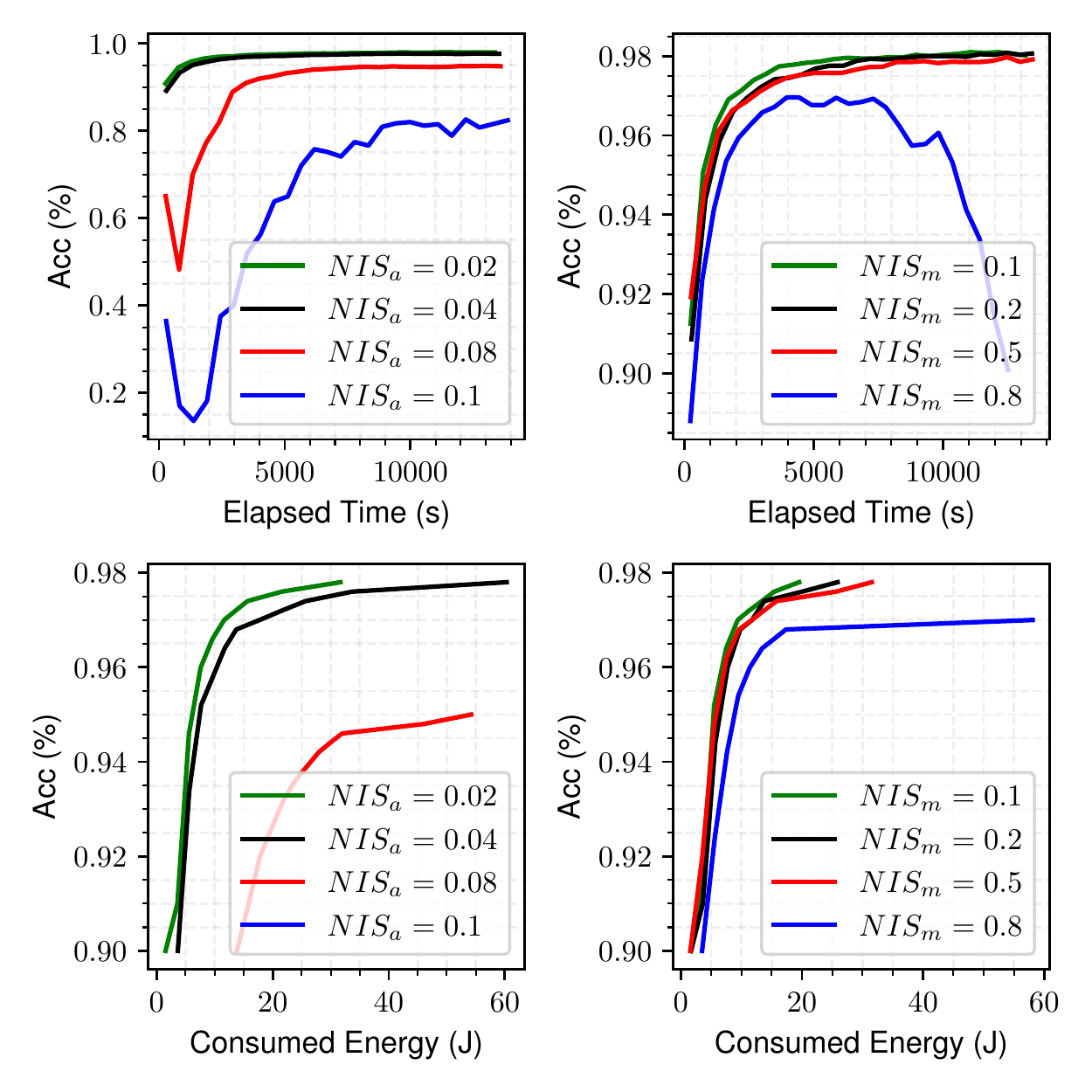}
        \caption{Image classification task.}
    \end{subfigure}%
~
    \begin{subfigure}[t]{0.5\textwidth}\label{traffic_loss}
        \centering
        \includegraphics[width=3.4in]{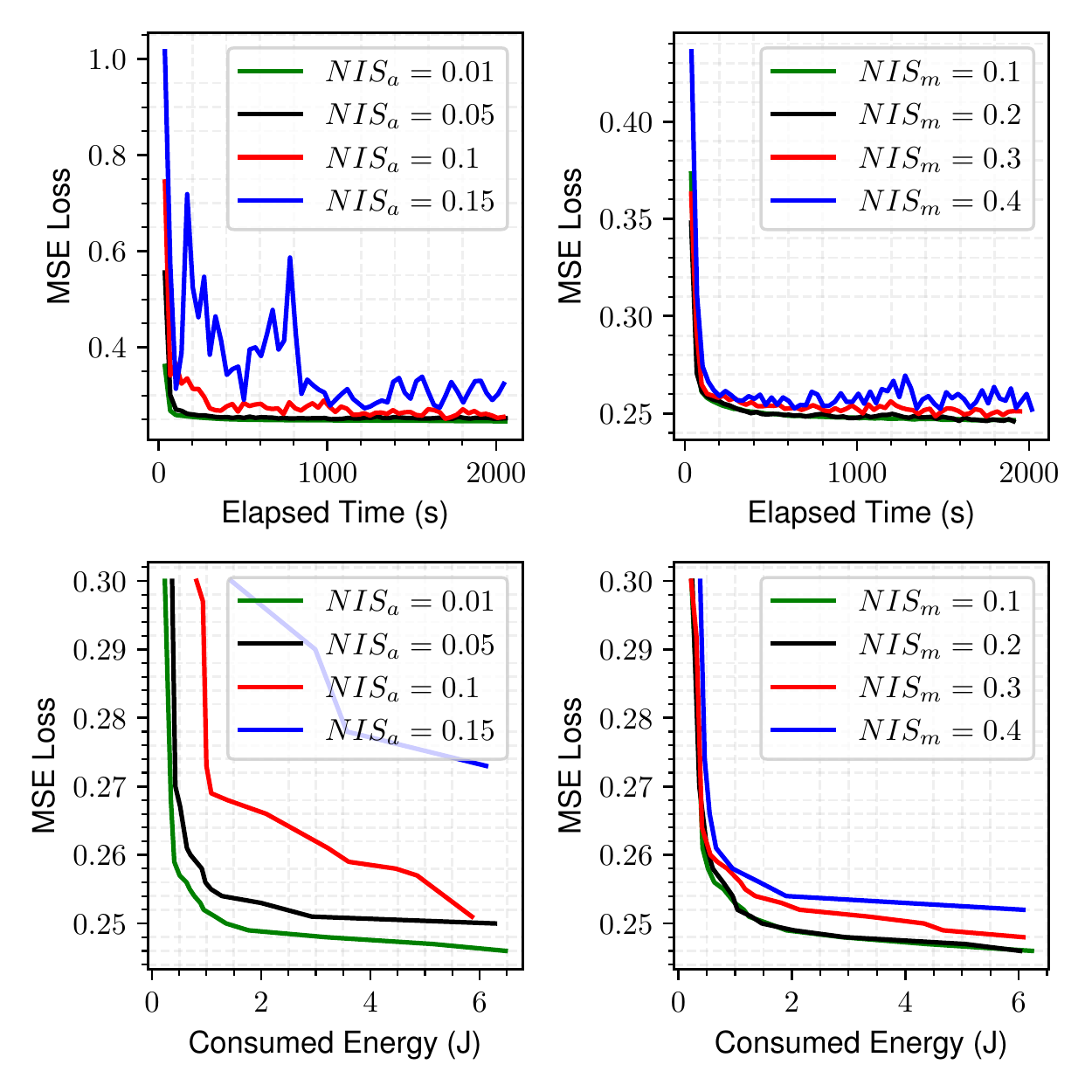}
        \caption{Wireless traffic prediction task.}
    \end{subfigure}
    \caption{Training performance comparisons among different types of noise on the real-world two ML tasks.}
    \label{fig_naughty}
\end{figure*}

The simulation results regarding the robustness test are shown in Fig.\ref{fig_naughty}, from which one straightforward observation is that the same noise will have different effects on different tasks. For instance, the performance has been significantly degraded for the classification task when $NIS_{a}$ of the additive noise equals 0.1. In contrast, the traffic prediction still has a competitive performance with the same additive noise.
We can observe a similar phenomenon when applying the multiplicative noise. The classification task has a higher level of robustness to the multiplicative noise than the prediction task. Moreover, even though the slight value noise has an in-distinctive impact on the accuracy or MSE loss performance, to reach the same performance, it will consume more energy and time compared to the benchmark without noise. In summary, even applying the same FedAvg algorithm under the same experimental conditions, different tasks with different model parameters will vary from different noise levels.

\subsection{Fairness}


\begin{table}[h]
\centering
\caption{ Energy and time consumed with different dataset partitions.}\label{tab:fairness}
\renewcommand\arraystretch{1}
\begin{tabular}{c|cccc}
\toprule 
\multicolumn{5}{c}{\textbf{Image Task}} \\\midrule
Partition & Energy  & Time   &  Accuracy  &  Ratio \\\midrule
$8:1:1:1$	 & $1.030$	 & $2.287$	 & $0.980$	 & $6.190$	 \\
$64:1:1:1$	 & $1.059	$ & $2.950	$ & $0.980$	 & $19.462$	 \\
$512:1:1:1$	 & $1.058$	 & $3.023$	 & $0.980$	 & $25.548$	 \\
$4096:1:1:1$	 & $1.074	$ & $3.064$	 & $0.980$	 & $25.259$	 \\
$512:512:512:1$	 & $0.996	$ & $1.233$	 & $0.980$	 & $14.031$	 \\
$4096:4096:4096:1$	 & $0.994$	 & $1.235$	 & $0.980$	 & $14.021$	 \\ \bottomrule 
\end{tabular}
\\
\begin{tabular}{c|cccc}
\multicolumn{5}{c}{\textbf{Traffic Task}} \\\midrule
Partition & Energy  & Time   &  Accuracy  &  Ratio \\\midrule
$8:1:1:1$	 & $0.965$	 & $2.343$	 & $0.248$	 & $7.381$	\\
$64:1:1:1$	 & $0.934$	 & $2.899$	 & $0.251$	 & $35.863	$\\
$512:1:1:1$	 & $0.938$	 & $3.017$	 & $0.252$	 & $72.126$	\\
$4096:1:1:1$		 & $0.904$	 & $2.914$	 & $0.257$	 & $77.928$	\\
$512:512:512:1$	 & $1.002$	 & $1.257$	 & $0.250$	 & $53.895$	\\
$4096:4096:4096:1$	 & $0.994$	 & $1.248$	 & $0.251$	 & $58.560$ \\\bottomrule
\end{tabular}

\end{table}

Fairness is also an important metric and should be evaluated when applying an algorithm in multi-agent environments. That is, some agents have more raw data than other agents and thus consume more energy during the local training phase. Such a situation could cause service imbalance and reduced training efficiency.
We simulate this scenario with different dataset partitions and focus on the system consumed energy, time, training performance, and the consumed energy ratio between two agents when the system reaches the termination condition.
In the following experiments, we consider the configurations with one WiFi AP and four agents served by the WiFi AP.
The dataset is partitioned according to the partition ratio at the beginning of each experiment. 
We examine the system outputs when the number of rounds equals 10 for the classification task and 25 for the traffic prediction task. 

The partition values given in Table \ref{tab:fairness} denote the dataset size ratio among four agents.
The energy and time in Table \ref{tab:fairness} denotes the total energy and time consumed by agents when reaching the preset termination condition. Also, to make a significant comparison of the energy and time consumed among different partitions, we set the uniform dataset partition as the baseline and present the ratio of energy and time to the baseline.
The ratio presented in Table \ref{tab:fairness} denotes the energy consumed ratio between the agents with the largest and smallest dataset. From Table \ref{tab:fairness}, we can observe that the energy consumed and evaluation results with different partition scenarios stay unchanged within acceptable errors. Although the impact of unbalance datasets is not significant, we can still tell that the unbalanced dataset partitions will affect the training performance for the traffic prediction task. On the other hand, we can observe from Table \ref{tab:fairness} that the unbalanced dataset partitions have significant effects on the total time consumed and energy ratio among different agents for both tasks. Besides, comparing the time columns from two tables, the time consumed ratios for the same partition keep the same within acceptable errors, while this is not the case for the ratios. This is because that the computation to communication consumed energy ratios are different for two tasks, which affects the consumed energy of each agent while does not affect the consumed time.

\subsection{Scalability}
\begin{table}[!t]
\centering

\renewcommand\arraystretch{1.1}
\caption{ Wall-clock time in seconds consumed per round with different numbers of cells and cores.}\label{tab:wallclock}

\begin{tabular}{|c|c|c|c|c|c|c|}
\hline
\multicolumn{7}{|c|}{\textbf{Image Classification Task}} \\\hline
\diagbox{\textit{C}}{\textit{Cr}} & 1  & 2   &  4  &  8 & 16 & 32  \\ \hline
\multirow{2}{1cm}{\centering 32} 	 & 5862	 & 3329 	 & 1595	 & 932	 & 624	 & 623	\\
\cline{2-7}
& 100.0\%  	 & 56.8\%  	 & 27.2\%  	 & 15.9\%  	 & 10.6\%  	 & 10.6\%  \\\hline

\multirow{2}{1cm}{\centering 64} 	 & 12712  	 & 6073  	 & 3318  	 & 1763  	 & 1282  	 & 1404	\\
\cline{2-7}
& 100.0\%  	 & 47.8\%  	 & 26.1\%  	 & 13.9\%  	 & 10.1\%  	 & 11.0\%  \\\hline

\end{tabular}
\\
\begin{tabular}{|c|c|c|c|c|c|c|c|}
\hline
\multicolumn{7}{|c|}{\textbf{Traffic Prediction Task}} \\\hline
\diagbox{\textit{C}}{\textit{Cr}} & 1  & 2   &  4  &  8 & 16 & 32 \\\hline

\multirow{2}{1cm}{\centering 32} 	 & 54  	 & 28  	 & 17  	 & 10  	 & 6  	 & 6 \\\cline{2-7}
 	 & 100.0\%  	 & 52.8\%  	 & 30.2\%  	 & 17.0\%  	 & 11.3\%  	 & 11.3\% \\\hline
\multirow{2}{1cm}{\centering 64} 	 & 110  	 & 58  	 & 32  	 & 19  	 & 13  	 & 14 \\ \cline{2-7}
 	 & 100.0\%  	 & 52.3\%  	 & 28.4\%  	 & 17.4\%  	 & 11.0\%  	 & 11.9\% \\\hline
\multirow{2}{1cm}{\centering 128} 	 & 225  	 & 135  	 & 64  	 & 39  	 & 29  	 & 33 \\\cline{2-7}
 	 & 100.0\%  	 & 59.8\%  	 & 28.6\%  	 & 17.0\%  	 & 12.9\%  	 & 14.3\% \\\hline
\multirow{2}{1cm}{\centering 256} 	 & 496  	 & 247  	 & 133  	 & 88  	 & 80  	 & 93  \\\cline{2-7}
 	 & 100.0\%  	 & 49.7\%  	 & 26.9\%  	 & 17.8\%  	 & 16.2\%  	 & 18.8\% \\\hline

\end{tabular}

\end{table}

Although the scalability of our proposed system is unrelated to the performance of an algorithm, we still would like to emphasize its importance for users when implemented in practice. We evaluate the scalability against the wall-clock (simulation running) time. 
The results for both image classification and traffic tasks are presented in Table \ref{tab:wallclock}. 
The cores and cells in each table header denote the numbers of computing cores and simulated cells utilized in each simulation. The simulated cells are uniformly distributed in all computing cores. 
We present the average running time per round in the first row for each cell scenario. We offer the percent of the wall-clock time of different cores to that with one core in the second row.
Due to the enormous wall-clock consumption of the image classification task, we only conduct the experiments with the number of cells no more than 64.

By comparing the results of different cells within the same number of cores, it is straightforward to observe that the wall-clock increases almost linearly with the number of cells.
By comparing the results of different numbers of cores within the same number of cells, the wall-clock time decreases as expected with the increase in cores. 
However, it takes almost the same wall-clock time to simulate one round when the number of cores equals 16 and 32. This is caused by the limitations on the multi-processing scheme and our hardware platform. Compared to the consumed wall-clock time by the simulator for computing purposes, sharing messages among multiple cores takes more time. As a result, increasing the number of computing cores in this situation will not help to decrease the wall-clock time.

\section{Conclusion}\label{c}
In this paper, we virtualized the basics of \gls{dai} in the wireless environments and proposed the \gls{airdai} system, which is capable of evaluating the training performance metrics and a set of system related QoS metrics. 
In addition,  we introduced a general wireless channel model and analyzed the impacts of operating \gls{dai} under different wireless setups on the convergence rate. The experimental results revealed how wireless transmission parameters and system configurations affect the training efficiency of the \gls{dai} algorithms. 
Based on the proposed \gls{airdai} system, we designed a Python built simulator that works on both single and multiple computing cores and is compatible with existing ML systems.
We took the well-known FedAvg algorithm as an example and conducted extensive experiments by the designed simulator.
The experimental results pertaining to prediction accuracy and QoS metrics verified the effectiveness and efficiency of the proposed system and its associated simulator. 
By this generic system design and the simulator codes provided, the research progress on \gls{dai} in wireless communication systems is expected to be accelerated.

\appendices
\section{Proof of Theorems}

In the appendix, we firstly introduce four general assumptions commonly applied in the SGD convergence analysis. 
Secondly, we introduce the definition of a new term to distinguish the scenarios of iid and non-iid dataset distributions. Then, we introduce the lemmas that gives the limitation of one-step SGD update and linear ratio relationship between learning rates. At last, we give the proof of convergence based on the above two lemmas.

\begin{assumption}
\label{assumption1}
$F_{1}, F_{2}, \cdots, F_{N}$ are all $L$ -smooth: for all $\mathbf{v}$ and $\mathbf{w}$, leading to $F_{k}(\mathbf{v}) \leq F_{k}(\mathbf{w})+(\mathbf{v}-$ $\mathbf{w})^{T} \nabla F_{k}(\mathbf{w})+\frac{L}{2}\|\mathbf{v}-\mathbf{w}\|_{2}^{2}$
\end{assumption}

\begin{assumption}
\label{assumption2}
$F_{1}, F_{2}, \cdots, F_{N}$ are all $\mu$ -strongly convex: for all $\mathbf{v}$ and $\mathbf{w}$, leading to $F_{k}(\mathbf{v}) \geq F_{k}(\mathbf{w})+(\mathbf{v}-$ $\mathbf{w})^{T} \nabla F_{k}(\mathbf{w})+\frac{\mu}{2}\|\mathbf{v}-\mathbf{w}\|_{2}^{2}$
\end{assumption}

\begin{assumption}
\label{assumption3}
Letting $\xi_{t}^{k}$ be randomly sampled from the $k$th device's local data in a uniform manner, the variance of stochastic gradients in each device is bounded by $\mathbb{E}\left\|\nabla F_{k}\left(\mathbf{w}_{t}^{k}, \xi_{t}^{k}\right)-\nabla F_{k}\left(\mathbf{w}_{t}^{k}\right)\right\|^{2} \leq \sigma_{k}^{2}$,
 $\forall~k=1, 2 \cdots, N$
\end{assumption}

\begin{assumption}
\label{assumption4}
The expected squared norm of stochastic gradients is uniformly bounded, i.e.,
$\mathbb{E}\left\|\nabla F_{k}\left(\mathbf{w}_{+}^{k}, \xi_{+}^{k}\right)\right\|^{2} \leq G^{2}$, $\forall~k=1, 2 \cdots, N$ and $\forall~t=0, 1 \cdots, T-1$
for $k=1, \cdots, N$
\end{assumption}
The assumptions mentioned above on functions $F_{1}, F_{2}, \cdots, F_{N}$ are general and necessary for the convergence analysis; 
typical examples include the $\ell_{2}$ -norm regularized linear regression, logistic regression, and softmax classifier.

To extend the analysis on both the iid and non-iid dataset partition scenarios, we propose a new term to quantify the degree of non-iid. The definition is as follows.

\begin{definition}
\label{definition1}
Let $F^{*}$ and $F_{k}^{*}$ be the minimum values of $F$ and $F_{k},$ respectively. We use the term $\Gamma=F^{*}-\sum_{k=1}^{N} p_{k} F_{k}^{*}$ to quantify the degree of heterogeneity of non-iid distributions. That is, if the data are iid, then $\Gamma$ goes to zero as the number of samples grows. If the data are non-iid, then $\Gamma$ is nonzero, and its magnitude signifies the heterogeneity of data distributions.
\end{definition}

With the above assumptions and definition, we formally present Lemma \ref{lemma:one_step}, which limits the expected distance between the current value and the optimum with one-step SGD.

\begin{appd_lemma}
\label{appd_lemma:one_step}
Assume that the central server received $\tilde{N}_{t}$ activated clients in the preset time window.
Letting $\Delta_{t}=\mathbb{E}\left\|\overline{\mathbf{w}}_{t+1}-\mathbf{w}^{\star}\right\|^{2}$, we have
\begin{equation}
    \Delta_{t+1} \leq\left(1-\eta_{t} \mu\right) \Delta_{t}+\eta_{t}^{2} (B + C_t)
\end{equation}
where
$B=\sum_{k=1}^{N} p_{k}^{2} \sigma_{k}^{2}+6 L \Gamma+8(E-1)^{2} G^{2}$
and $C_t = \frac{N-\tilde{N}_{t}}{N-1} \frac{4}{\tilde{N}} E^{2} G^{2}$.
\end{appd_lemma}
\begin{proof}
The proof of the presented lemma can be found in \cite{li2019convergence}. 
\end{proof}

We present Lemma \ref{lemma:linear_ratio} as follows, in which we aim at finding the learning rate relations between the full device participation setting and the partial device participation setting caused due to limited time window.
\begin{appd_lemma}
\label{appd_lemma:linear_ratio}
Denote $\bar{\eta}_t$ to be the learning rate at communication round $t$ to guarantee the algorithm convergence when full devices are participated.
Let $r_t = \frac{\tilde{N}_t}{N}$ be the device participation ratio at communication round $t$. The convergence of the algorithm when partial devices are participated can be guaranteed by setting $\eta_t = r_t\bar{\eta}_t$.
\end{appd_lemma}

\begin{proof}
Let $\bar{\eta}_t = \frac{\mu\tilde{\Delta}_{t}}{2B}$, which implies that $C_t=0$ and the number of clients are all activated, we can obtain the following relations:
\begin{align}
    \eta_t =& \bar{\eta}_t\frac{B}{B+C_t} =\Bar{\eta}_t\left[1+\varepsilon\left(\frac{N-K_t}{K_t}\right)\right]^{-1},
\end{align}
where $\varepsilon = \frac{C_t}{B}\times\frac{\tilde{N}_t}{N-\tilde{N}_t}$ is a $\tilde{N}_t$-irrelevant constant. 
For simplicity, $\varepsilon$ could be stipulated to be unity, and hence, we obtain the following relation
\begin{equation}
\label{lr_adjust}
    \eta_t = \frac{\tilde{N}_t}{N}\Bar{\eta}_t = r_t\Bar{\eta}_t,
\end{equation}
which indicates that we can adapt the learning rate linearly with respect to the number of activated clients. 
\end{proof}

With the lemmas and assumptions mentioned above, we are able to give the bound on the convergence of FedAvg algorithm in wireless environment settings as follows,
\begin{appd_theorem}
Let the assumptions hold and $L, \mu, \sigma_{k}, G$ be defined therein. Choose $\kappa=\frac{L}{\mu}$, $\gamma=\max \{8 \kappa, E\}$ and the learning rate $\eta_{t}=\frac{2r_t}{\mu(\gamma+t)} .$ Then FedAvg algorithm in wireless environments satisfies
$$
\mathbb{E}\left[F\left(\mathbf{w}_{T}\right)\right]-F^{*} \leq \frac{2 \kappa}{\gamma+T}\left(\frac{B+D}{\mu}+2 L\left\|\mathbf{w}_{0}-\mathbf{w}^{*}\right\|^{2}\right),
$$
where $B=\sum_{k=1}^{N} p_{k}^{2} \sigma_{k}^{2}+6 L \Gamma+8(E-1)^{2} G^{2}$, and $D = 4E^2G^2$.
\end{appd_theorem}

\begin{proof}
Our proof starts with the full device participation condition. 
Let $C_t=0$, from Lemma \ref{appd_lemma:one_step} we obtain as follows,
\begin{equation}
    \Delta_{t+1} \leq\left(1-\eta_{t} \mu\right) \Delta_{t}+\eta_{t}^{2} B , 
\end{equation}

For a diminishing step size, $\eta_{t}=\frac{\beta}{t+\gamma}$ for some $\beta>\frac{1}{\mu}$ and $\gamma>0$ such that $\eta_{1} \leq \min \left\{\frac{1}{\mu}, \frac{1}{4 L}\right\}=\frac{1}{4 L}$ and $\eta_{t} \leq 2 \eta_{t+E} .$ We will prove $\Delta_{t} \leq \frac{v}{\gamma+t}$ where $v=\max \left\{\frac{\beta^{2} B}{\beta \mu-1},(\gamma+1) \Delta_{1}\right\}$.
We prove it by induction. Firstly, the definition of $v$ ensures that it holds for $t=1$. Assume the conclusion holds for some $t$, it follows that
\begin{equation}
\begin{aligned}
\Delta_{t+1} & \leq\left(1-\eta_{t} \mu\right) \Delta_{t}+\eta_{t}^{2} B \\
&=\left(1-\frac{\beta \mu}{t+\gamma}\right) \frac{v}{t+\gamma}+\frac{\beta^{2} B}{(t+\gamma)^{2}} \\
&=\frac{t+\gamma-1}{(t+\gamma)^{2}} v+\left[\frac{\beta^{2} B}{(t+\gamma)^{2}}-\frac{\beta \mu-1}{(t+\gamma)^{2}} v\right] \\
& \leq \frac{v}{t+\gamma+1}
\end{aligned}
\end{equation}

Then by the strong convexity of $F(\cdot)$
\begin{equation}
    \mathbb{E}\left[F\left(\overline{\mathbf{w}}_{t}\right)\right]-F^{*} \leq \frac{L}{2} \Delta_{t} \leq \frac{L}{2} \frac{v}{\gamma+t}
\end{equation}

Specifically, if we choose $\beta=\frac{2}{\mu}, \gamma=\max \left\{8 \frac{L}{\mu}-1, E\right\}$ and denote $\kappa=\frac{L}{\mu}$, then $\eta_{t}=\frac{2}{\mu} \frac{1}{\gamma+t}$ and
\begin{equation}
    \mathbb{E}\left[F\left(\overline{\mathbf{w}}_{t}\right)\right]-F^{*} \leq \frac{2 \kappa}{\gamma+t}\left(\frac{B}{\mu}+2 L \Delta_{1}\right)
\end{equation}

For $C_t > 0$ (partial participation), from Lemma \ref{lemma:linear_ratio}, we know that the convergence is guaranteed by setting $\eta_t = r_t\bar{\eta}_t$, where $\bar{\eta}_t$ is the learning rate in full participation condition.
Therefore, let $\eta_t = \frac{2r_t}{\mu(\gamma+t)}$ and replace $B$ with $B+C_t$, we have
\begin{equation}
\begin{aligned}
    \mathbb{E}\left[F\left(\overline{\mathbf{w}}_{t}\right)\right]-F^{*} &\leq \frac{2 \kappa}{\gamma+t}\left(\frac{B+C_t}{\mu}+2 L \Delta_{1}\right)\\
    &\leq 
    \frac{2 \kappa}{\gamma+t}\left(\frac{B+D}{\mu}+2 L \Delta_{1}\right),
\end{aligned}
\end{equation}
where $D=4E^2G^2$ is the upper bound of $C_t$. 
\end{proof}

\ifCLASSOPTIONcaptionsoff
  \newpage
\fi

\bibliographystyle{IEEEtran}
\bibliography{Journal_version}


\end{document}